\newcommand{\ignore}[1]{}
\def\ben{\begin{equation}}
\def\een{\end{equation}}
\def\half{{\textstyle{\frac{1}{2}}}}
  \let\re=\ref
\let\pa=\partial
\def\be{\begin{equation}}
\def\ee{\end{equation}}
\def\beq{\begin{equation}}
\def\eeq{\end{equation}}
\def\ba{\begin{array}}
\def\ea{\end{array}}
\def\dalemb#1#2{{\vbox{\hrule height .#2pt
       \hbox{\vrule width.#2pt height#1pt \kern#1pt
               \vrule width.#2pt}
       \hrule height.#2pt}}}
\newcommand{\bea}{\begin{eqnarray}}
\newcommand{\eea}{\end{eqnarray}}
\newcommand{\tr}{{\rm tr} }
\newcommand{\Tr}{{\rm Tr} }
\newcommand*\bigcdot{\mathpalette\bigcdot@{.5}}
\newcommand*\bigcdot@[2]{\mathbin{\vcenter{\hbox{\scalebox{#2}{$\m@th#1\bullet$}}}}}
\newtheorem{definition}{Definition}[section]
\newtheorem{proposition}[definition]{Proposition}
\renewcommand{\eqref}[1]{(\ref{#1})}
\def\R{{{\Bbb R}}}
\def\Z{{{\Bbb Z}}}
\numberwithin{equation}{section}
\begin{document}
\frenchspacing
\begin{center}

{ \Large {\bf
Topological order in matrix Ising models
}}

\vspace{1cm}

Sean A. Hartnoll, Edward A. Mazenc and Zhengyan D. Shi

\vspace{1cm}

{\small
{\it Department of Physics, Stanford University, \\
Stanford, CA 94305-4060, USA }}

\vspace{1.3cm}

\end{center}

\begin{abstract}
We study a family of models for an $N_1 \times N_2$ matrix worth of Ising spins $S_{aB}$. In the large $N_i$ limit we show that the spins soften, so that the partition function is described by a bosonic matrix integral with a single `spherical' constraint. In this way we generalize the results of \cite{PhysRevLett.74.1012} to a wide class of Ising Hamiltonians with $O(N_1,\Z)\times O(N_2,\Z)$ symmetry. The models can undergo topological large $N$ phase transitions in which the thermal expectation value of the distribution of singular values of the matrix $S_{aB}$ becomes disconnected. This topological transition competes with low temperature glassy and magnetically ordered phases.

\end{abstract}
\thispagestyle{empty}
\pagebreak
\pagenumbering{arabic}


\newpage

\tableofcontents

\section{Overview}

Some years ago now, a remarkable work introduced a model of non-locally interacting Ising spins whose high temperature phase could be mapped onto a matrix integral, allowing the partition function to be computed \cite{PhysRevLett.74.1012}. The original interest in this model was due to the fact that the low temperature phase --- not captured by a matrix integral --- described a structural glass. Our objective in this paper is twofold. Firstly, we will generalize the solution of the model of \cite{PhysRevLett.74.1012} to several families of $N_1 \times N_2$ non-locally interacting spins. Secondly, we will emphasize that, prior to vitrification, these models can generically undergo topological large $N$ phase transitions. Such transitions are known to be ubiquitous in matrix integrals, the Gross-Witten-Wadia transition being a well-known example \cite{GrossWitten,Wadia:1980cp}, but are nontrivial from the perspective of the original Ising spins. The connectivity of the large $N$ singular value distribution of a matrix of Ising spins gives a simple instance of topological order in a classical spin system.

The heart of the first result is a spin softening theorem, showing that the discreteness of the Ising spin variables is (almost) washed away in the large $N_i$ limit. The variables no longer square to unity and a single `spherical constraint' on the emergent bosonic degrees of freedom remains. This is a well-established phenomenon in spin models \cite{PhysRev.86.821, PhysRev.176.718}. More precisely, given an $N_1 \times N_2$ matrix worth of Ising spins $S_{aB}\in \pm 1$, we will show that for certain classes of spin Hamiltonians $H[S]$, at temperatures above any glassy or ordering transitions, the partition function
\be\label{eq:soft}
\sum_{S_{aB} = \pm 1} e^{- \beta H[S]} \;\; \xrightarrow{N_i \to \infty} \;\; \left(2e^{-\frac{1}{2}}\right)^{N_1 N_2} \int dM \delta(\tr [MM^T] - N_1 N_2) e^{- \beta H[M]} \,.
\ee
Here $M_{aB}$ is a matrix of bosons. The configuration space of the spins are the $2^{N_1N_2}$ vertices of a hypercube, while the bosons
take values in a hypersphere $S^{N_1 N_2 -1}$. The bosonic integrals can be evaluated using standard techniques \cite{Brezin:1977sv}.

We will focus on the family of Hamiltonians
\be\label{eq:Ham1}
H = \sum_n \frac{v_n}{N_1^{n-1}} \tr \left[ (SS^T)^n \right] \equiv \tr [V(SS^T)] \,,
\ee
where the trace $\text{tr}$ is over the matrix indices and the $v_n$'s are order one couplings. The model with the $n=2$ term only, which is quartic in the spins, was mapped to matrices in \cite{PhysRevLett.74.1012,Sean14} using a Hubbard-Stratonovich decoupling --- familiar from replica descriptions of disordered spins \cite{Denef:2011ee} --- as a key step.
In \S\ref{spinsoft} we generalize those arguments to terms with $n>2$. The essential characteristic of the Hamiltonian (\ref{eq:Ham1}) is not the matrix-like interactions, but rather the $O(N_1,\Z)\times O(N_2,\Z)$ symmetry (described in \cite{mqmqubits}). For example, our spin softening theorem also applies to models of the form
$H = \sum_n \frac{u_n}{N^{n}} \sum_{a \neq b} [(SS^T)_{ab}]^{2n}$.

To make the spin softening (\ref{eq:soft}) tangible, Fig.\ref{fig:energycurves} contains the results of numerical simulations of the spin system (\ref{eq:Ham1}) with $N_1 = N_2 = 120$ together with the large $N_i$ matrix integral result. Two illustrative cases are plotted, $H = \tr [(S S^T)^3]$ and $H = - 3 \tr [(SS^T)^4] + \tr [(SS^T)^5]$. The former is the next simplest monomial potential, beyond the $n=2$ case studied in \cite{PhysRevLett.74.1012}. The latter, as we shall see, illustrates how negative terms in the Hamiltonian can induce topological transitions.  The energy $E = - \partial (\log Z)/\partial \beta$ is seen to match up to $1/N_i$ corrections, as advertized, above a glassy transition temperature $T_\text{gl}$. Below the glassy temperature, the matrix model energy continues to decrease while the Ising model `freezes out' \cite{PhysRevLett.74.1012}. In the plots, we have also marked with a dot the location of the topological transition. These transitions occur prior to the glassy freeze-out and are hence captured by the matrix integral.
\begin{figure}[h]
    \includegraphics[width = 0.5\textwidth,height=5cm]{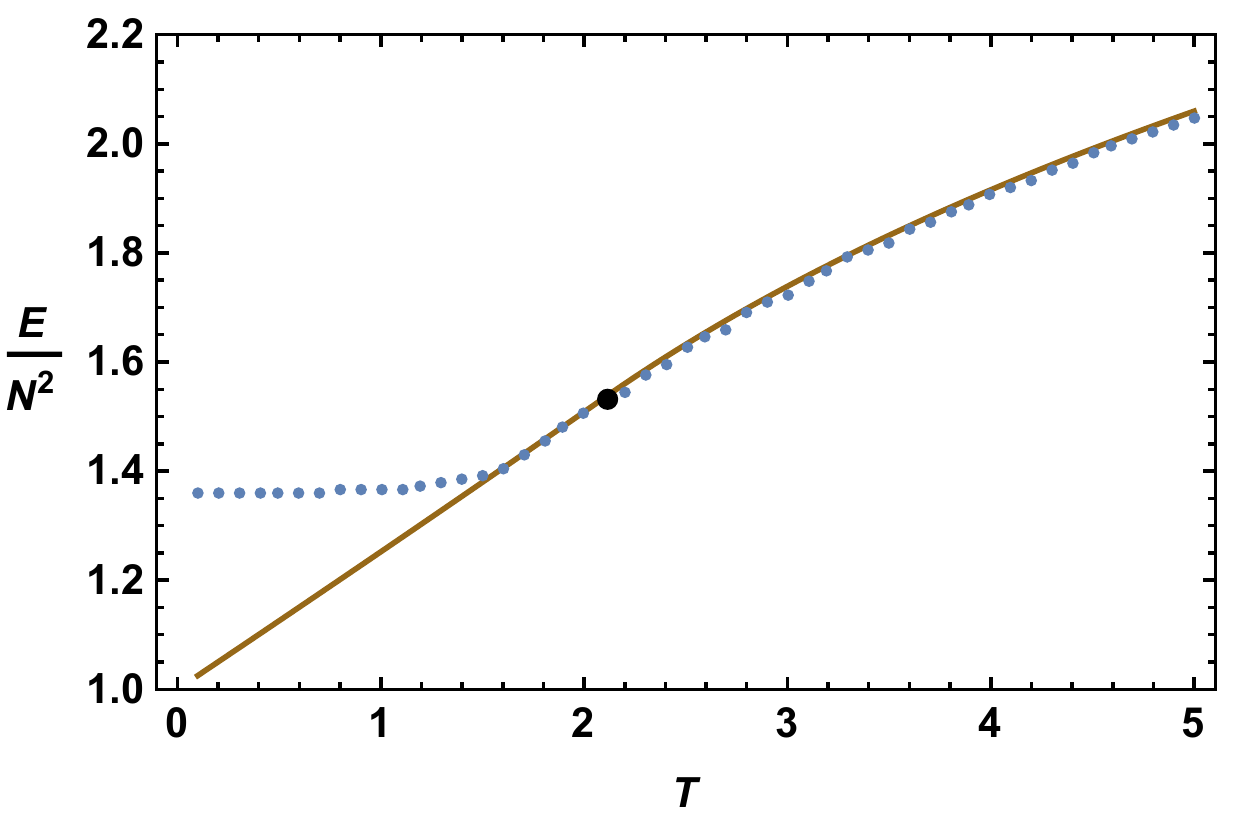}
    \includegraphics[width = 0.5\textwidth,height=5cm]{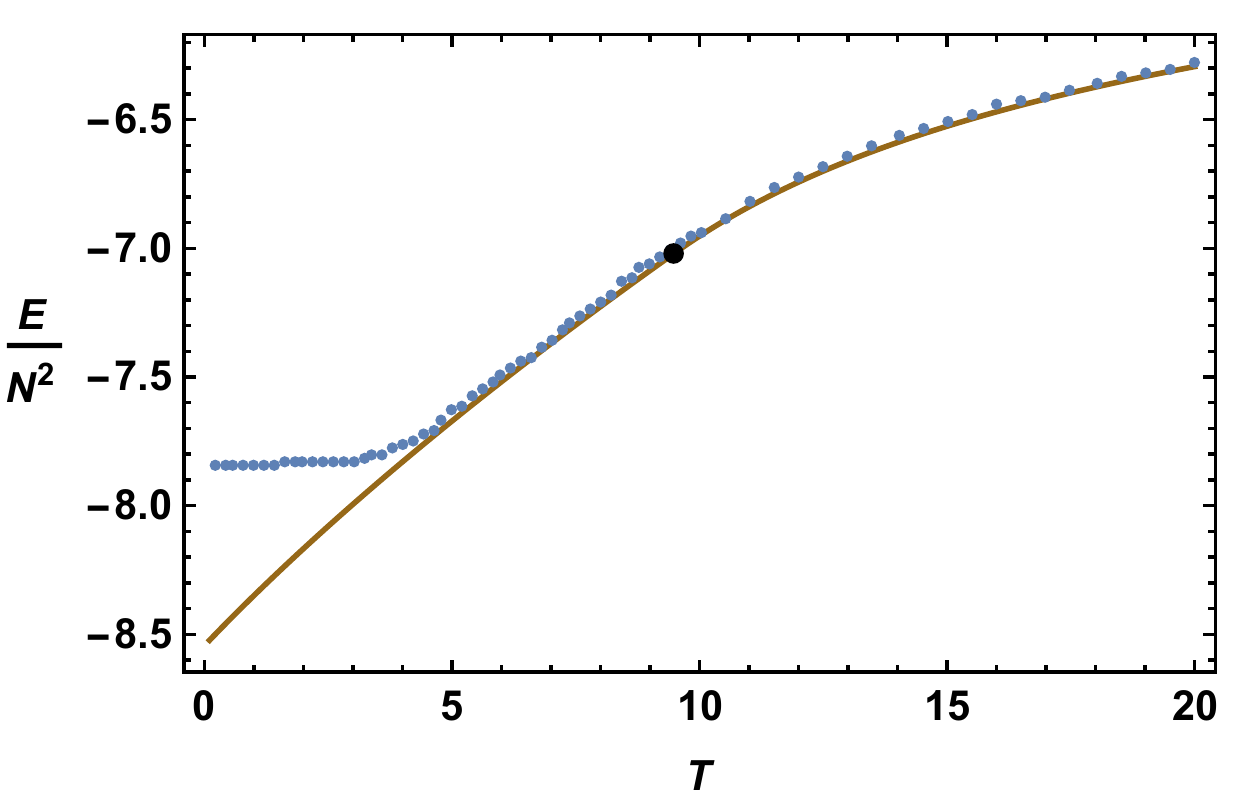}
    \caption{Large $N$ energy density of two matrix Ising models as a function of temperature computed by numerical Monte Carlo simulation of spins (blue dots) and analytically from the corresponding matrix model (brown curve). The left plot has $H = \tr [(S S^T)^3]$ and the right plot has $H = - 3 \tr [(SS^T)^4] + \tr [(SS^T)^5]$. The black dot indicates the location of the topological transition, which is above the glassy transition in both cases.}
    \label{fig:energycurves}
\end{figure}
In \S\ref{MMsolution} we give a detailed description of these third order transitions by solving the matrix integral. In \S\ref{sec:Ising} we show how the change in connectivity of the distribution of singular values of the $S_{aB}$ spin matrices can be seen clearly in numerics, even while the non-analyticity in the energy as a function of temperature is very weak. We also describe a finite $N$ approximation to the large $N$ topological order parameter (the number of components of the distribution) that makes the critical temperature identifiable in numerical simulations of the spin system.

In the discussion in \S\ref{discussion} we comment on the importance of topological phase transitions for generalizing the spin softening results to quantum matrix spin systems.

\section{Proof of spin softening} \label{spinsoft}

In this section, we give a rigorous derivation of `spin softening', focusing on models of the form (\ref{eq:Ham1}). Many steps are similar to those in \cite{PhysRevLett.74.1012}, with differences due to the fact that a general potential $V(SS^T)$ cannot be mapped to a Gaussian integral via a Hubbard-Stratonovich transformation.

\ignore{ Consider two families of Hamiltonians: 

\begin{equation}
    H_{n,1} = \frac{1}{N^{q-1}} \tr (SS^T)^q \quad H_{n,2} = \frac{1}{N^{q/2}} \sum_{i \neq j} [(SS^T)_{ij}]^q
\end{equation}
Where $S_{ij}$ is an $N \times (\alpha N)$ matrix worth of Ising spins. When $q=2$ and $\alpha = 1$, $H_{n,1}$ and $H_{n,2}$ both reduce to the Parisi model and can be analyzed using arguments we are familiar with (note that for $H_{n,2}$ the condition that $i \neq j$ in the sum is crucial to get a sensible $1/N$ expansion). For general $q$, it turns out that we can still make progress, although we cannot directly do Hubbard-Stratanovich on the partition function.} 

The strategy can be outlined as follows. First, we trivially rewrite the sum over spin values as $N_1 N_2$ constrained integrals. Inserting multiple resolutions of the identity, we introduce the collective field $G_{ab}=(SS^T)_{ab}$ as well as a Lagrange multiplier field $\sigma_{ab}$. We then show how only $O(N_1,\R)$ singlets contribute at large $N_i$. The resulting path integral is then seen to be identical to the corresponding $G,\sigma$ integrals for a matrix model with continuous entries $M_{aB}$ and a single spherical constraint. After integrating the collective fields back out, we arrive at the promised spherically constrained matrix model.

The $G,\sigma$ fields are introduced as follows:
\begin{equation}\label{eq:part}
    \begin{aligned}
    Z(\beta) &= \Tr e^{-\beta H} = \int dS \delta(S_{aB}^2 - 1) e^{-\beta \tr [V(SS^T)]} \\
    &= \int dG \int dS \delta(G_{cd} - (SS^T)_{cd}) \delta(S_{aB}^2 - 1) e^{-\beta \tr [V(G)]} \\
    &= \int dG  e^{-\beta \tr [V(G)]} \int \frac{d\sigma}{2\pi} \int dS \delta(S_{aB}^2 - 1)  e^{- i\sum_{cd} \sigma_{cd} (G_{cd} - (SS^T)_{cd})} \\
    &= \int dG  e^{-\beta \tr [V(G)]} \int \frac{d\sigma}{2\pi} e^{- i \sum_{cd} \sigma_{cd} G_{cd}} \Tr e^{i \sum_{ab} \sigma_{ab}(SS^T)_{ab}} \,.
    \end{aligned}
\end{equation}
In the last line, we have rewritten the $S$ integral again as a trace over spin operators (not over matrix indices). The next step will be to compute this trace.

\ignore{What is this trace? It is nothing but the trace that we had to evaluate in the Parisi model, after the first H-S transformation. To see that more explicitly, notice that if $n=2$ in the last line, the $K$ integral is Gaussian, and we can do the integral to recover the $e^{- \frac{N}{\beta} \tr Q^2}$ term that came from H-S transformation in the Parisi derivation. Hence, everything is consistent so far. }

The first step in evaluating the trace, following \cite{PhysRevLett.74.1012}, is to introduce an undetermined set of variables $\mu_a$ by adding zero to the exponent in the trace as $0 = \sum_a \mu_a (N_2 - (SS^T)_{aa})$. With this additional term, we can write (this step is where $O(N_2,\Z)$ symmetry is being used)
\be\label{eq:one}
\Tr e^{i \sum_{ab} \sigma_{ab}(SS^T)_{ab}} = e^{i N_2 \sum_a  \mu_a} z(\sigma,\mu)^{N_2}  \,,
\ee
where, following some standard manipulations \cite{Denef:2011ee}
\begin{align}
\label{eq:two}
z(\sigma,\mu) & = \frac{1}{\sqrt{\det \tilde \sigma}} \int dw e^{- \frac{1}{2} \sum_{ab} w_a (\tilde \sigma^{-1})_{ab} w_b + \sum_a \log (2 \cosh (w_a))} \\
&= \frac{2^{N_1}}{\sqrt{\det \tilde \sigma}} \int dw e^{- \frac{1}{2} \sum_{ab} w_a [(\tilde \sigma^{-1})_{ab} - \delta_{ab}] w_b + \sum_a \left(- \frac{1}{12} w_a^4 + \frac{1}{45} w_a^6 + \cdots \right)} \,. \label{eq:threeM}
\end{align}
Here, we have defined a new variable $\tilde \sigma_{ab} \equiv 2 i (\sigma_{ab} - \mu_a \delta_{ab})$. Using (\ref{eq:one}) and (\ref{eq:two}) in (\ref{eq:part}), we see that there are no sums over spins left, only bosonic integrals. However, while the powers of $w_a$ in (\ref{eq:three}) that are greater than two are invariant under $O(N_1,\Z)$ transformations $w_a \rightarrow O_{ab}w_b$, they are not invariant under continuous $O(N_1,\R)$ transformations. The crucial step in the spin-softening theorem is now to show that a certain choice of the $\mu_a$ (thus far arbitrary) renders these non-singlet terms negligible in the large $N_i$ limit.

The propagator for the $w_a$ in (\ref{eq:threeM}) is seen to be $P_{ab}(\tilde \sigma) \equiv (1/(\tilde \sigma^{-1} - 1))_{ab} = (\tilde \sigma/(1 - \tilde \sigma))_{ab}$. A sufficient condition for the non-singlet terms to be negligible at large $N$ is that
\be\label{eq:condition}
P_{ab}(\tilde \sigma) = O\left(1/\sqrt{N}\right) \quad \forall a \neq b \qquad \text{and} \qquad P(\tilde \sigma)_{aa} = 0 \quad \forall a \,.
\ee
This can be verified by expanding the exponential, Wick contracting, and re-exponentiating (see \cite{Sean14} for a more explicit discussion). We can now check that the first set of conditions in (\ref{eq:condition}) are automatically true while the latter are not. This second set of $N$ conditions can be imposed, however, by a suitable choice of the $N$ quantities $\mu_a$. This amounts to setting 
$\mu_a = \mu_a^\star$ such that
\be\label{eq:condition2}
\left(\frac{1}{1 - \tilde \sigma^\star}\right)_{aa} = 1 \qquad \forall a \,.
\ee
Here $\tilde \sigma_{ab}^\star \equiv 2 i (\sigma_{ab} - \mu_a^\star \delta_{ab})$. It remains, then, to verify the first set of conditions in (\ref{eq:condition}).

{\it Assuming} that the scaling of the components of $G$ with $N$ is determined by the matrix integral term in the last line of (\ref{eq:part}), we can establish that the variance $\Delta G_{ab} \sim \sqrt{N}$ by standard random matrix theory arguments. It then follows from (\ref{eq:part}) that $\Delta \sigma_{ab} \sim 1/\Delta G_{ab} \sim 1/\sqrt{N}$, and therefore $\Delta P_{ab} \sim 1/\sqrt{N}$ for $a \neq b$, while $\Delta P_{aa} \sim 1$. A more rigorous derivation of these statements is given in Appendix \ref{sec:rig}. These variances give the typical contribution of components of the propagator $P$ to the integral (\ref{eq:threeM}). The $a \neq b$ components are of the magnitude required by (\ref{eq:condition}), while the diagonal terms are too large. For this reason, the constraint (\ref{eq:condition2}) must be imposed. Imposing this condition, we proceed to drop the non-singlet terms in (\ref{eq:threeM}). While the assumption of matrix scaling of $G$ is self-consistent, we will see in \S\ref{sec:problems} that it fails to capture glassy or magnetically ordered regimes at low temperatures.

After dropping the non-singlet terms in (\ref{eq:threeM}), simple manipulations (doing the $w$ integral, simplifying the determinants, and introducing a new integral over a matrix $M$) lead to
\begin{align}
    \Tr e^{i \sum_{ab} \sigma_{ab}(SS^T)_{ab}} & = 2^{N_1 N_2} e^{i N_2 \sum_a \mu_a^\star} \int dM e^{- \half \sum_{abC} M_{aC} (1-\tilde \sigma^\star)_{ab} M_{bC}} \\
& = 2^{N_1 N_2} \int d\mu e^{i N_2 \sum_a \mu_a} \int dM e^{- \half \sum_{abC} M_{aC} (1-\tilde \sigma)_{ab} M_{bC}} \,. \label{eq:muM}
\end{align}
In the second line, we used the remarkable --- and greatly simplifying --- fact that the value $\mu^\star$ required for (\ref{eq:condition2}) is precisely the value picked out as the large $N$ saddle point if $\mu$ is integrated over. This allows us to avoid needing to find $\mu^\star$ explicitly as a function of $\sigma$.

Using (\ref{eq:muM}) in (\ref{eq:part}), we can do the $\sigma$ integral (obtaining a delta function) and then the $G$ integral (which `eats up' the delta function) to obtain
\be
    Z(\beta) 
    = 2^{N_1 N_2} \int dM \int d \mu e^{i \sum_a \mu_a \left[N_2 - (MM^T)_{aa}\right]} e^{-\frac{1}{2} \tr [M M^T]} e^{-\beta \tr [V(MM^T)]} \,.\label{eq:half}
\ee
In (\ref{eq:half}), the microscopic $N_1 N_2$ constraints $(S_{aB})^2=1$ have been reduced to the $N_1$ constraints $\sum_{A}(M_{aA})^2=N_2$, imposed by the Lagrange multipliers $\mu_{a}$. To make further progress, we argue that a consistent large $N$ saddle point has 
$\mu_a = \mu$ for all $a$. This is true because upon integrating out $M$ to get an effective action for the $\mu_a$, the large $N$ saddle point equations for $\mu_a$ are permutation invariant. {\it Assuming} that this is the dominant large $N$ saddle, we finally obtain:
\begin{equation}\label{eq:Zmat}
    \begin{aligned}
    Z(\beta) &= \left(2e^{-\frac{1}{2}}\right)^{N_1 N_2} \int dM  \int d \mu e^{i \mu \left[N_1 N_2 - \tr (MM^T)\right]} e^{-\beta \tr[V(MM^T)]} \,.
    \end{aligned}
\end{equation}
This is the `spin softened' partition function advertized in (\ref{eq:soft}) and seen in the numerical results of Fig.\ref{fig:energycurves}. We have also verified numerically that the partition functions (\ref{eq:half}) and (\ref{eq:Zmat}) agree at all temperatures, justifying this last assumption a posteriori.

When the matrix integral correctly captures the large $N$ spin partition function, it will also capture connected correlators of spins of the form $\ev{\tr [(SS^T)^{k_{1}}] \cdots \tr [(SS^T)^{k_{n}}]}_c$. These are obtained by introducing sources $J_k \tr [(SS^T)^k]$ into the action and differentiating the partition function with respect to the couplings $J_k$. Thus, for example, the energy $E = - \partial_\beta \log Z$ and specific heat $C = - \pa^2_\beta \log Z$ are captured by the matrix integral. On the other hand, non-singlet observables such as the magnetization $M=\ev{\sum_{aB}S_{aB}}$ and the susceptibility $\chi=\ev{\sum_{aB}S_{aB}\sum_{cD}S_{cD}}_{c}$ are not captured by the matrix description (as can be verified numerically). 

In Appendix \ref{sec:modelB} we show that this spin softening theorem also goes through for the class of Hamiltonians
\begin{equation}
    H = \sum_n \frac{u_n}{N^n} \sum_{a \neq b} [(SS^T)_{ab}]^{2n} = U(SS^T) \,.
\end{equation}

\section{Topological transition in the large $N$ matrix integral} \label{MMsolution}

\subsection{The distribution of singular values}

The partition function (\ref{eq:Zmat}) can be computed using standard methods for matrix integrals. The matrix $M$ admits a singular value decomposition
\be
M = U \Lambda V^T \,,
\ee
where $U$ and $V$ are orthogonal matrices and $\Lambda$ is the diagonal matrix formed out of the singular values $\{\lambda_i\}$ of $M$. The matrix integral in (\ref{eq:Zmat}) does not depend on the angular variables $U$ and $V$, so these integrals can be performed trivially. We will further restrict attention to the case of square matrices with $N_1 = N_2$.\footnote{When $N_1 \neq N_2$ there is an extra $\log|\lambda_{i}|$ term in the effective action for the singular values \cite{doi:10.1063/1.527481, PhysRevLett.74.1012}. This term gives a repulsive force away from the origin and causes the distribution of singular values to be disconnected, even at high temperatures. Topological transitions can still occur in such cases, along the lines of the $\mathbf{1} \to \mathbf{3}$ transition considered below.} The measure $dM = J dU dV d\Lambda$, with the Jacobian $J=\prod_{i<j} |\lambda_{i}^2-\lambda_{j}^2|$ as in \cite{PhysRevLett.74.1012}. Finally, we introduce the rescaled variables $\sqrt{N} x_i = \lambda_i$ to write
\begin{equation} \label{eAct}
    \begin{aligned}
    Z(\beta) &= \text{const} \cdot \int d \mu \int \Big( \textstyle{\prod_i dx_i}\Big)  e^{N^2 \left[ i \mu \left(1 - \frac{1}{N} \sum_i x_i^2\right) - \beta \frac{1}{N} \sum_i \hat V(x_i) + \frac{1}{2} \frac{1}{N^2} \sum_{i \neq j} \log |x_i^2 - x_j^2| \right]} \,.
    \end{aligned}
\end{equation}
Here, from (\re{eq:Ham1}),
\be\label{eq:Vhat}
\hat V(x) = \sum_n v_n x^{2n} \,.
\ee
On the saddle point $i \mu$ will be real, and so we set $i \mu \equiv \hat \mu$ in the following.
 
At large $N$, the integrals in (\ref{eAct}) can be evaluated on the saddle point. The two saddle point equations are
\begin{equation}
    \frac{1}{N} \sum_i x_i^2 = 1 \,, \qquad \hat \mu x_i + \frac{\beta}{2} \hat V'(x_i)  - \frac{1}{N} \sum_{j \neq i} \frac{x_i}{x_i^2 - x_j^2} = 0 \,.
\end{equation}
In terms of the normalized and symmetrized density of singular values,
\begin{equation}
	\rho(x) = \frac{1}{2N} \sum_i \left[ \delta(x-x_i) + \delta(x+x_i) \right] \,,
\end{equation}
the saddle point equations can be written as the integral equations:
\begin{equation}\label{eq:intEq}
	\int dx \rho(x) x^2 = 1 \,, \qquad \hat \mu x + \frac{\beta}{2} \hat V'(x) = P \int dy \rho(y) \frac{1}{x-y} \,.
\end{equation}

The second equation in (\ref{eq:intEq}) describes the singular values moving in an external potential
\be\label{eq:ext}
V_\text{ext}(x) = \frac{1}{2} \left(\beta \hat V(x) + \hat \mu x^2 \right)\,,
\ee
and with a logarithmic repulsive interaction between them. In the high temperature limit ($\beta \to 0$) the quadratic $\hat \mu x^2$ term dominates the external potential $V_\text{ext}(x)$. One finds that $\hat \mu \to \frac{1}{2}$. The balance between the quadratic external potential and the logarithmic repulsion leads to the well-known connected Wigner semi-circle distribution. In the low temperature limit ($\beta \to \infty$), the external potential becomes strong and overcomes the logarithmic repulsion. The singular values accumulate at the minima $x_\star$ of the external potential: $\rho(x) \to \sum_\star s_\star \delta(x-x_\star)$. We will proceed to show that in all cases the external potential $V_\text{ext}(x)$ develops minima away from the origin, and therefore the low temperature distribution is disconnected. This necessitates a topological transition at intermediate temperatures.

\subsection{Potentials with a unique minimum and the $\mathbf{1} \to \mathbf{2}$ transition}\label{convex-1to2}

In this subsection, we consider the case of potentials $\hat V(x)$ with a unique minimum at the origin. A disconnected distribution arises at low temperatures because the constraint $\int dx \rho(x) x^2 = 1$ does not allow all the singular values to collapse to zero. This translates into $\hat \mu < 0$ in the external potential (\ref{eq:ext}) at low temperatures. The external potential now has a pair of minima at $x = \pm x_\star$, leading to a distribution with two disconnected components at low temperatures: $\rho(x) \to \frac{1}{2} (\delta(x-x_\star) + \delta(x+x_\star))$. The constraint $\int dx x^2 \rho(x) = 1$ then fixes $x_\star = 1$, and hence $\hat \mu \to - \frac{1}{2} \beta \hat V'(1)$. The energy (\ref{eq:en}) of this zero temperature state is $E = N^2 \hat V(1)$. For this class of potentials, therefore, we expect a transition from $\mathbf{1} \to \mathbf{2}$ components at intermediate temperatures. We proceed to characterize this transition in detail. In the following subsection, we will consider the case where $\hat V(x)$ already has additional minima, prior to consideration of the constraint.

The second integral equation in (\ref{eq:intEq}) can be solved using well-known methods \cite{Brezin:1977sv}. In particular, the connected `single-cut' solution can be written in the form 
\be\label{eq:rho1sol}
\rho(x) = \frac{\hat \mu}{\pi} \sqrt{a^2-x^2} - \sum_n \frac{\beta v_n}{\pi} \frac{(2n)!}{4^n [(n-1)!]^2} \frac{x^{2n}}{|x|} B\left(\frac{x^2}{a^2},\frac{1}{2}-n,\frac{1}{2}\right) \,,
\ee
with support on $[-a,a]$. Here, $B$ denotes an incomplete beta function. The distribution has the form of a polynomial times $\sqrt{a^2-x^2}$. Given the solution (\ref{eq:rho1sol}), the two constants $a$ and $\hat \mu$ are determined by imposing
$\int dx \rho(x) = 1$ and $\int dx \rho(x) x^2 = 1$.
The integrals can be done explicitly, and the constraints become
\be\label{eq:cons}
\frac{\hat \mu a^2}{2} + \sum_n \beta v_n \frac{a^{2n} (2n)!}{4^n n! (n-1)!} = 1 \,, \qquad \frac{\hat \mu a^4}{8} + \sum_n \beta v_n \frac{n a^{2n+2} (2n)!}{2 \cdot 4^{n} (n+1)! (n-1)!} = 1 \,.
\ee
In solving the constraint equations, it is important to restrict to solutions where the distribution $\rho(x)$ is everywhere non-negative.

At some critical $\beta_\text{c}$, a solution to the constraints (\ref{eq:cons}) leads to a zero in the distribution. For $\beta > \beta_\text{c}$ (i.e. at low temperatures), the single-cut solution will no longer be non-negative everywhere and the correct solution is necessarily disconnected. From the physical discussion of the external potential above, it is clear that the distribution will disconnect at the origin. Therefore, the critical temperature can be determined from the condition that $\rho(0) = 0$:
\be\label{eq:betacrit}
\hat \mu a^2 + \sum_n \beta_\text{c} v_n \frac{a^{2n} (2n)!}{4^n (n-\frac{1}{2})[(n-1)!]^2} = 0 \,. 
\ee
For example, in the case of a monomial potential $\hat V(x) = v_n x^{2n}$ we can solve (\ref{eq:cons}) and (\ref{eq:betacrit}) explicitly to obtain the critical temperature
\be
T_\text{c} = \frac{1}{\beta_\text{c}} = \frac{v_n}{2 \sqrt{\pi}} \left[\frac{4}{3} \left(1 + \frac{1}{n} \right)\right]^n \frac{\Gamma\left(n - \frac{1}{2} \right)}{\Gamma\left(n - 1 \right)} \,.\label{eq:Tc}
\ee
Within this class of models, $T_\text{c}/v_n$ increases monotonically from $T_\text{c} = v_2$ at $n=2$ to 
\be
T_\text{c} \sim v_n \sqrt{\frac{n e^2}{4 \pi}} \left(\frac{4}{3}\right)^n \qquad \text{as} \qquad n \to \infty \,.
\ee
In this limit the critical temperature increases exponentially with $n$. The width of the distribution at the critical point in these models is $a^2 = \frac{4}{3} \frac{1+n}{n}$, which remains finite as $n \to \infty$. It is simple to determine the critical temperature numerically for more general models with polynomial potentials (but still with a single minimum, at the origin).

Once a connected distribution of singular values ceases to exist, one must look for a disconnected `two cut' solution. The solution can be found as in e.g. \cite{Cicuta:1986pu}, and can be written as
\be\label{eq:rho2sol}
\rho(x) = \frac{1}{\pi} \sum_n \beta v_n Q_n(x) \sqrt{(b^2 - x^2)(x^2-a^2)} \,,
\ee
with support on $[-b, -a] \cup [a,b]$ where the polynomial
\be
Q_n(x) = n |x|^{2n-3} \sum_{p=0}^{n-2} \frac{b^{2p}}{(2x)^{2p}} \frac{(2p)!}{(p!)^2} {}_2F_1\left(\frac{1}{2},-p,\frac{1}{2}-p; \frac{a^2}{b^2} \right) \,.
\ee
The constants $a, b$ and $\hat \mu$ are determined through the two constraints
\bea
1 & = & \frac{\hat \mu}{2} (a^2 + b^2) + \sum_n \beta v_n \frac{b^{2n} (2n)!}{4^n n! (n-1)!} {}_2F_1\left(\frac{1}{2},-n,\frac{1}{2}-n; \frac{a^2}{b^2} \right) \,, \\
0 & =  & \hat \mu b^2+ \sum_n \beta v_n \frac{n}{n-\frac{1}{2}} \frac{b^{2n} (2n)!}{4^n n!(n-1)!} {}_2F_1\left(\frac{1}{2},1-n,\frac{3}{2}-n; \frac{a^2}{b^2} \right) \,,
\eea
as well as the condition that $\int dx \rho(x) x^2 = 1$. This last integral can be done in closed form and the constraint becomes
\bea
\lefteqn{1 = \sum_n \frac{2 n b^{2n+2}}{2^{2n+2}} \beta v_n \sum_{p=0}^{n-2} \frac{(2p)! (2(n-p))!}{(n-p)!(1+n-p)!(p!)^2} \times} \nonumber \\ && {}_2F_1\left(-\frac{1}{2},-1-n+p,\frac{1}{2}-n+p,\frac{a^2}{b^2} \right){}_2F_1\left(\frac{1}{2},-p,\frac{1}{2}-p,\frac{a^2}{b^2} \right) \,. \label{eq:intt}
\eea

The appearance of a disconnected singular value distribution at $\beta_\text{c}$
leads to a third order large $N$ quantum phase transition \cite{GrossWitten,Wadia:1980cp,Cicuta:1986pu}. We can see this explicitly as follows. The energy is given by
\be\label{eq:en}
E = - \frac{d \log Z}{d \beta} = N^2 \int dx \rho(x) \hat V(x) \,.
\ee
This integral is easily evaluated on the single cut solution. It can also be evaluated on the two cut solution, in terms of sums of hypergeometric functions, similarly to (\ref{eq:intt}). For the case of a monomial potential $\hat V(x) = v_n x^{2n}$, with critical temperature $T_\text{c}$ given by (\ref{eq:Tc}), the energy just above and just below the transition is thereby found to be
\be
\frac{E}{T_\text{c}}  =
\left\{
\begin{array}{c}
\displaystyle \frac{1-4n^2}{2n(1-n^2)} + \frac{(1-2n)^2}{2n(1+n)^2} \frac{T-T_\text{c}}{T_\text{c}} - \frac{3 (1-2n)^2}{4n (1+n)^3} \frac{(T-T_\text{c})^2}{T_\text{c}^2} + \cdots  \quad T > T_\text{c} \\[10pt]
\displaystyle \frac{1-4n^2}{2n(1-n^2)} + \frac{(1-2n)^2}{2n(1+n)^2} \frac{T-T_\text{c}}{T_\text{c}} - \frac{(1-2n)^2}{2n (1+n)^2} \frac{(T-T_\text{c})^2}{T_\text{c}^2} + \cdots \quad T < T_\text{c}
\end{array} \right. \,.
\ee
The second derivative of the energy with respect to temperature is seen to be discontinuous at the critical temperature.
There is no symmetry breaking associated to this phase transition. It is a topological transition with a topological order parameter given by the number of components of the large $N$ distribution. We will discuss this order parameter further in \S \ref{TopologicalIndex} below.

\subsection{Potentials with several minima and the $\mathbf{1} \to \mathbf{3}$ transition}\label{sec:withmin}

When the potential had a single minimum, the topological transition was driven purely by the spherical constraint. This constraint prevented the singular values from accumulating at the origin at low temperatures. When the potential has several minima, however, there are minima away from zero already in $\hat V(x)$. This leads to a slightly different topological transition. For concreteness, we will focus on models with two terms such that
\be\label{eq:twoterm}
\hat V(x) = - |v_n| x^{2n} + v_{n+1} x^{2n+2} \,.
\ee
Here, we choose $v_{n+1} > 0$ so that the function indeed has a pair of minima away from the origin. To see what kind of topological transition is expected to arise, we can solve for the low temperature distribution. The external potential will overcome the repulsion between singular values (as previously in the $\mathbf{1} \to \mathbf{2}$ transition), and so we look for a distribution of the form
\be\label{eq:three}
\rho(x) = (1 - 2 s_\star) \delta(x) + s_\star \left[\delta(x-x_\star) + \delta(x+x_\star) \right]\,.
\ee
We are now allowing for some singular values to be at the origin because we will see shortly that $\hat \mu = - \frac{\beta}{2 x_\star} V'(x_\star) > 0$ at low temperatures. The constraint $\int dx x^2 \rho(x) = 1$ implies that $x_\star^2 = 1/(2 s_\star)$. The fraction $s_\star$ of singular values away from the origin is determined by minimizing the total energy $E = N^2 \int dx \rho(x) \hat V(x)$. This gives
\be
x_\star^2 = \frac{(n-1) |v_n|}{n v_{n+1}} \,.
\ee
This solution is valid so long as $x_\star \geq 1$, ensuring that the weight of the delta function at the origin is positive. When this condition is not satisfied, the energy is minimized by setting $x_\star = 1$, and there is no delta function at the origin. This case reduces to that in the previous section. However, when $x_\star > 1$, the zero temperature distribution has three connected components as in (\ref{eq:three}). This leads us to anticipate -- in these cases -- a topological transition $\mathbf{1} \to \mathbf{3}$ in which the high temperature connected distribution breaks into three separate components. We proceed to consider this case in more detail.

The high temperature distribution is again given by (\ref{eq:rho1sol}). However, the transition now occurs at $\beta_\text{c} = 1/T_\text{c}$ such that there is a point $x_\text{c}$ (typically away from the origin) with $\rho(x_\text{c}) = \rho'(x_\text{c}) = 0$. These two equations can be solved (e.g. numerically) for $\beta_\text{c}$ and $x_\text{c}$.

Below the critical temperature, the distribution takes the three-cut form
\be
\rho(x) = \frac{1}{\pi} Q(|x|) \, \text{sgn} (x^2-a^2) \sqrt{(x^2-a^2)(x^2-b^2)(c^2-x^2)} \,,
\ee
supported on $[-c, -b] \cup [-a,a] \cup [b,c]$, and with $Q(x)$ a degree $2n-1$ polynomial that is odd under $x \to -x$. The presence of three cuts implies that the fraction of singular values in each cut is no longer fixed by symmetry. Following the discussion of \cite{Eynard2000}, we introduce the extended action
\begin{equation}
    \begin{aligned}
    S[\rho;f_{\alpha},\Gamma_{\alpha},\mu] &= \int d x \big(- \beta \hat V(x) + \frac{1}{2} \int d x' \rho(x') \log |x^2 - x'^2|\big) \rho(x) \\
    &+ \sum_{\alpha=1}^3 \Gamma_{\alpha} [f_{\alpha} - \int_{C_{\alpha}} d x \rho(x)] + i \mu [1 - \int \rho(x) x^2 dx] \,.
    \end{aligned}
\end{equation}
This is the action (\ref{eAct}), together with Lagrange multipliers $\Gamma_{\alpha}$ enforcing the filling fraction constraints $f_{\alpha} = \int_{C_{\alpha}} dx \rho(x)$. Here, $C_{\alpha}$ denotes the three disconnected supports in order of increasing $x$.

Due to the normalization constraint $\sum_{\alpha} f_{\alpha} = 1$ and the symmetry constraint $f_1 = f_3$, we can rewrite the action functional in terms of $f_2$ alone:
\bea\label{extendedAct}
  \lefteqn{ S[\rho;f_2,\Gamma_{\alpha},\mu] = \int d x \big(- \beta \hat V(x) + \frac{1}{2} \int d x' \rho(x') \log |x^2 - x'^2|\big) \rho(x)} \\
   && + (\Gamma_1 + \Gamma_3)[1-f_2 - \int_{C_1 \cup C_3} d x \rho(x)] + \Gamma_2 [f_2 - \int_{C_2} dx \rho(x)] + i \mu [1 - \int \rho(x) x^2 dx] \,. \nonumber
\eea
In order to solve for $\rho$, we now minimize (\ref{extendedAct}) with respect to all of its arguments. The condition $\frac{\partial S}{\partial f_2} = 0$ gives
\begin{equation}\label{eynardconstraint}
    \Gamma_3 - \Gamma_2 = 0 = \int_a^b Q(x) \sqrt{(x^2 - a^2) (x^2 -b^2) (x^2 -c^2)} \,.
\end{equation}
The $n$ polynomial coefficients of $Q(x)$ and the parameters $a,b,c,i\mu=\hat \mu$ are fixed by \eqref{eynardconstraint} together with the $n+3$ constraints that follow from imposing the asymptotic behavior of the resolvent
\be
\frac{1}{2} V_\text{ext}'(x) - Q(x) \sqrt{(x^2-a^2)(x^2-b^2)(x^2-c^2)} \sim \frac{1}{x} + \frac{1}{x^3} \quad \text{as} \quad x \to +\infty \,.
\ee
The leading $1/x$ behavior is familiar from standard cases (see e.g. \cite{Brezin:1977sv}). The subleading $1/x^3$ behavior is equivalent to imposing the spherical constraint that $\int \rho(x)x^2 dx = 1$. This follows from expanding the resolvent
\be
G(x) \equiv \frac{1}{N x} \left\langle \tr \left(1-\frac{M^T M}{N x^2}\right)^{-1} \right\rangle= \frac{1}{x}+\frac{\langle \tr(M^T M) \rangle }{N^2 x^2 } + \cdots = \frac{1}{x} + \frac{1}{x^3} + \cdots\,.
\ee
These constraints can be solved numerically and the large $N$ three cut distribution can be determined. In Fig. \ref{fig:dist} we have seen that the matrix integral result obtained in this way matches the Monte Carlo simulation of Ising spins, below the topological transition temperature and above the glass transition temperature. The topological transition leads to a third order non-analyticity in the energy at $T_\text{c}$, similarly to the $\mathbf{1} \to \mathbf{2}$ case discussed in the previous subsection.

\section{Topological transition in the matrix Ising model}\label{sec:Ising}

\subsection{Numerical results}

The matrix spin system can be simulated numerically using standard annealed Monte Carlo methods. The output is a thermal ensemble of matrices $S_{aB}$ of spins. These matrices can be used to compute the thermal expectation value of the energy (\ref{eq:Ham1}). Furthermore, the singular values of these matrices can be computed and binned to obtain a thermally averaged symmetrized distribution of singular values.
\begin{figure}[h]
    \centering
    \begin{subfigure}[h]{0.45\textwidth}
        \includegraphics[width=\textwidth]{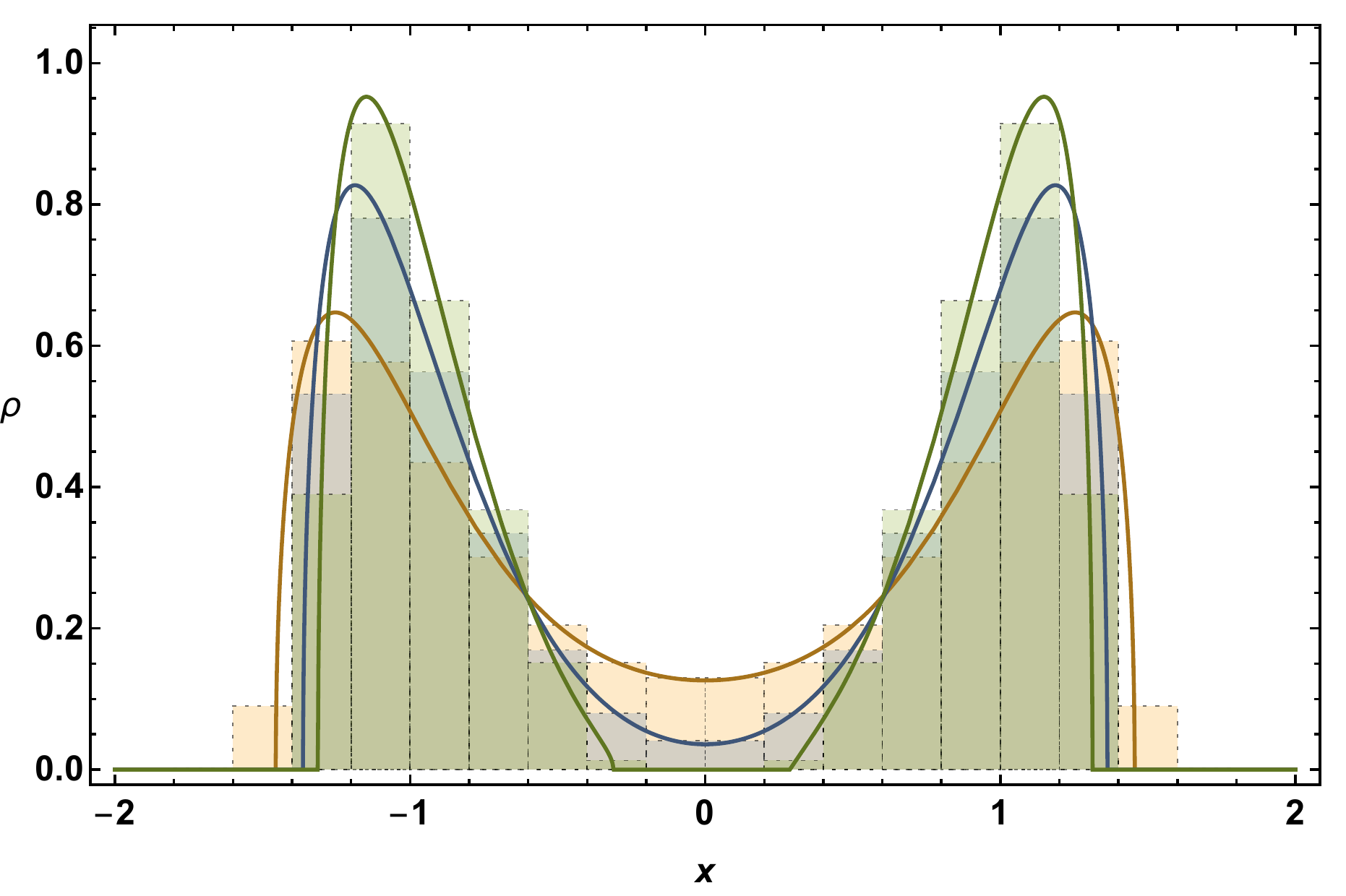}
        \caption{$v_3 = 1$}
    \end{subfigure}
    ~ 
    \begin{subfigure}[h]{0.45\textwidth}
        \includegraphics[width=\textwidth]{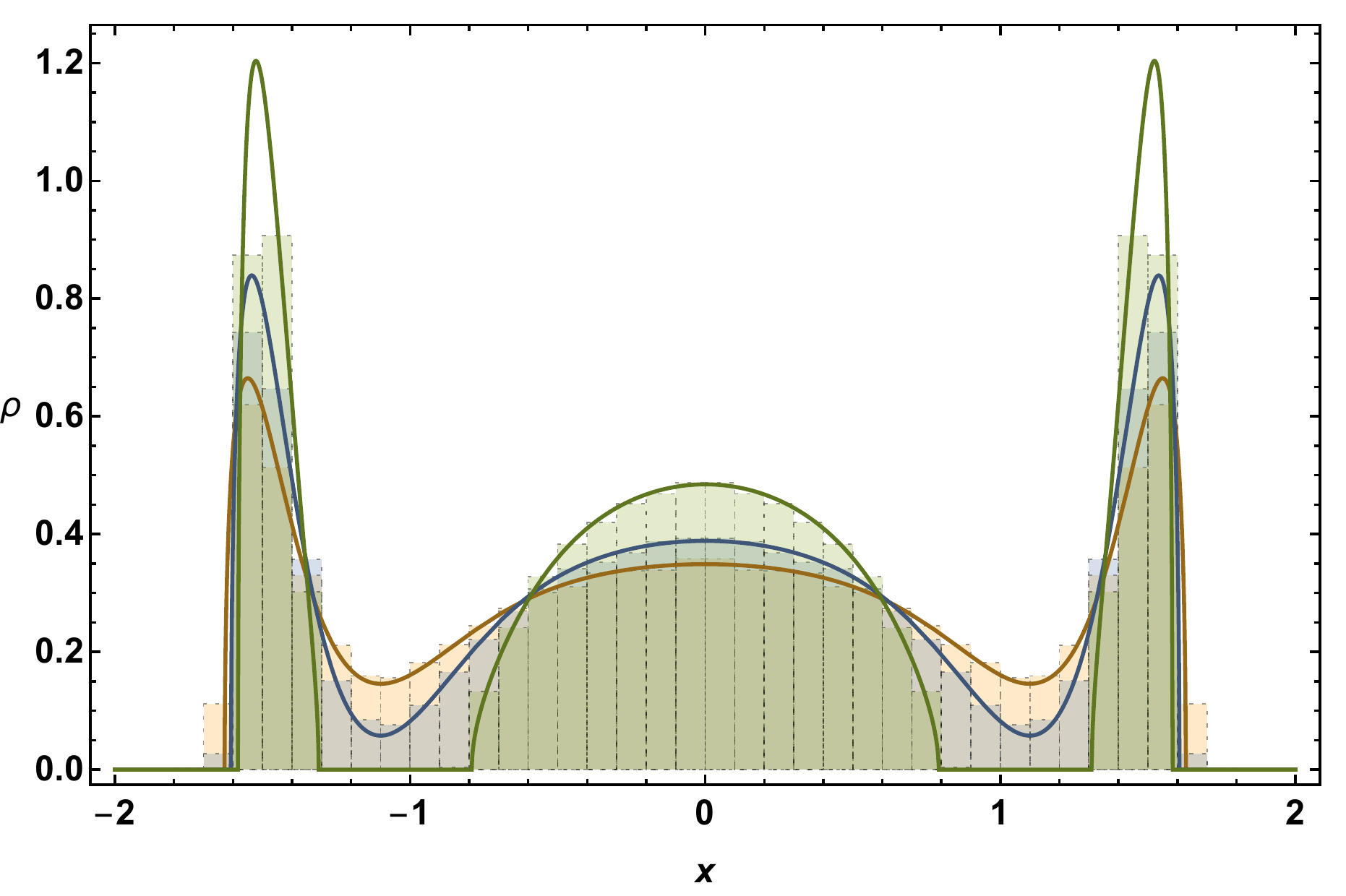}
        \caption{$v_4 = -3, v_5 = 1$}
    \end{subfigure}
    \caption{Distribution of singular values for two large $N$ matrix Ising models. Histograms are from Monte Carlo simulations of the matrix Ising model. Solid lines are analytically computed distributions from the corresponding matrix integral.
 For the model considered in the left plot (a), $T_c = 2.11$ and the distribution is shown at $T = 5,2.6,1.8$. For the right plot (b), $T_c = 9.52$ and the temperature shown are $T = 20,12,6$.}
    \label{fig:dist}
\end{figure}

Fig. \ref{fig:dist} shows the numerically computed symmetrized  singular value distribution for the models whose energy was shown in Fig. \ref{fig:energycurves} above. The figure shows excellent agreement with the matrix integral distribution, obtained by solving the matrix integral as described in the previous section (both single and multi-cut solutions). The figure also clearly reveals the topological transition, which is not obvious in Fig. \ref{fig:energycurves}.

\subsection{Topological Order Parameter}\label{TopologicalIndex}

The $N = \infty$ topological transition is characterized by a change in connectedness of the (symmetrized) distribution of singular values. Let $\rho(z)$ be an analytic continuation of this distribution to the complex plane. Then, an integer quantity that jumps across the topological transition is
\be\label{eq:inf}
n = \frac{1}{\pi i} \oint_\Gamma \frac{\rho'(z)}{\rho(z)} dz \,.
\ee
Here, $\Gamma$ is a contour that runs above and below the real axis. We saw in \S\ref{MMsolution} that $\rho(z)$ has a square root branch cut on the real axis along the support of the solution and no further zeros on the real axis. It follows that $n$ counts the number of disconnected components of the distribution on the real axis.

At large but finite $N$, the notion of connectedness of the distribution is not precisely defined, as the distribution is simply a sum of delta functions. Correspondingly, the transition will be smooth. In practice, however, at large but finite $N$ one can clearly see the topological transition in numerical simulations, as in Fig. \ref{fig:dist} above. To define an `order parameter' that approximates (\ref{eq:inf}) and captures these changes at finite $N$, one must introduce a smeared version of the distribution
\be\label{eq:rhoN}
\rho_N(z) = \frac{\epsilon_N}{\pi N} \sum_{a=1}^N \frac{1}{(z - \lambda_a)^2 + \epsilon_N^2} = \frac{i}{2 \pi N} \sum_{a=1}^N \left(\frac{1}{z - \lambda_a + i \epsilon_N} - \frac{1}{z - \lambda_a - i \epsilon_N} \right) \,.
\ee
The small number $\epsilon_N$ will be specified shortly. This function is not quite ready to be inserted into (\ref{eq:inf}), because it contains no zeros on the real axis. The smeared distribution will however fall off rapidly away from the support of the large $N$ distribution. Therefore, the necessary zeros can be introduced by shifting the entire distribution slightly downwards, so that
\be\label{eq:fin}
n_N = \frac{1}{\pi i} \oint_{\Gamma_N} \frac{\rho_N'(z)}{\rho_N(z) - \eta_N} dz \,.
\ee
The small number $\eta_N > 0$ and the contour $\Gamma_N$ will be specified shortly. The objective is to produce a well-defined quantity $n_N$ such that $\lim_{N \to \infty} n_N = n$. This will allow the topological integer $n$ to be extracted from numerics at large but finite $N$. In particular, it allows the critical temperature --- where $n$ jumps --- to be estimated systematically from finite $N$ numerics.

To choose the appropriate $\epsilon_N, \eta_N$ and $\Gamma_N$, we must understand the location of the poles and the zeros of the smeared distribution $\rho_N(z)$ in (\ref{eq:rhoN}) as a function of $N$. It is easy to see that all zeros and poles of (\ref{eq:rhoN}) are at least a distance $\epsilon_N$ away from the real axis. If, then, the contour $\Gamma_N$
runs above and below the real axis at a distance $\epsilon_N/2$, the only contribution to (\ref{eq:fin}) is from zeros of $\rho_N(z) - \eta_N$ on the real axis. We must now define $\epsilon_N$ and $\eta_N$ so that these zeros only occur close to the boundaries of the large $N$ distribution $\rho(z)$.

As $N \to \infty$, the typical spacing between singular values, with our normalization, is $\Delta \lambda \sim N^{-1}$. So long as $\epsilon_N \gg \Delta \lambda$, so that the individual spikes associated with each singular value are smeared out, the distribution $\rho_N(z)$ should uniformly approach the large $N$ distribution $\rho(z)$. We will take $\epsilon_N = 2 \, \text{IQR}/\sqrt[3]{N}$, corresponding to the Freedman-Diaconis rule for binning. Here, IQR is the interquartile range.

The uniform convergence of the distribution breaks down at the boundaries of the distribution. Expanding (\ref{eq:rhoN}) in $\epsilon_N$ and then taking the large $N$ limit, we can write
\be\label{eq:expand}
\rho_N(z) = \rho(z) - \frac{2 \epsilon_N}{\pi} \omega'(z) + \cdots \,.
\ee
Here, $\omega(z) = \sum_a (z - \lambda_a)^{-1} $ is the resolvent. We know that $\rho(z) \sim \sqrt{\lambda_\star - z} \sim 1/\omega'(z)$ close to a boundary $\lambda_\star$ of the distribution. Therefore, the correction in (\ref{eq:expand}) is only small if $\epsilon_N \ll |\lambda_\star - z|$. Essentially this is because there are a large number of singular values accumulating close to the boundary of the distribution, and hence at such values of $z$ it is not legitimate to expand (\ref{eq:rhoN}) in $\epsilon_N$. An accurate approximation to the endpoints can be found by taking a large enough shift $\eta_N$ so that $\epsilon_N \ll \eta_N^2$. However, this scaling overestimates the transition temperature at finite $N$ for the following reason. As the transition is approached from above, the distribution vanishes at an interior point $x_c$ as $\rho(z) \sim (z - x_\text{c})^2$. The uniform convergence does not break down close to this smoother vanishing and taking a large $\eta_N$ introduces zeros at a higher temperature than necessary. To accurately capture the topological transition temperature, we can take instead a smaller shift, $\eta_N \sim \epsilon_N$. While this shift causes the location of outer boundaries of the distribution to be incorrectly identified (for the reason just discussed), this fact does not matter for the topological quantity (\ref{eq:fin}). Nothing interesting is happening with the outer boundaries.

Numerical simulations of the Ising model at some given finite $N$ produce many eigenvalues $\lambda_a$. By combining several independent Monte Carlo states, we can produce higher quality statistics for the thermally averaged distribution. Given the eigenvalues, we then choose $\eta_N \sim \epsilon_N$ as specified above and numerically find the zeros of the denominator of (\ref{eq:fin}). The number of these zeros determines $n_N$. The results are shown in Fig. \ref{fig:topological}.

\begin{figure}[h!]
    \centering
    \includegraphics[width=0.65\textwidth]{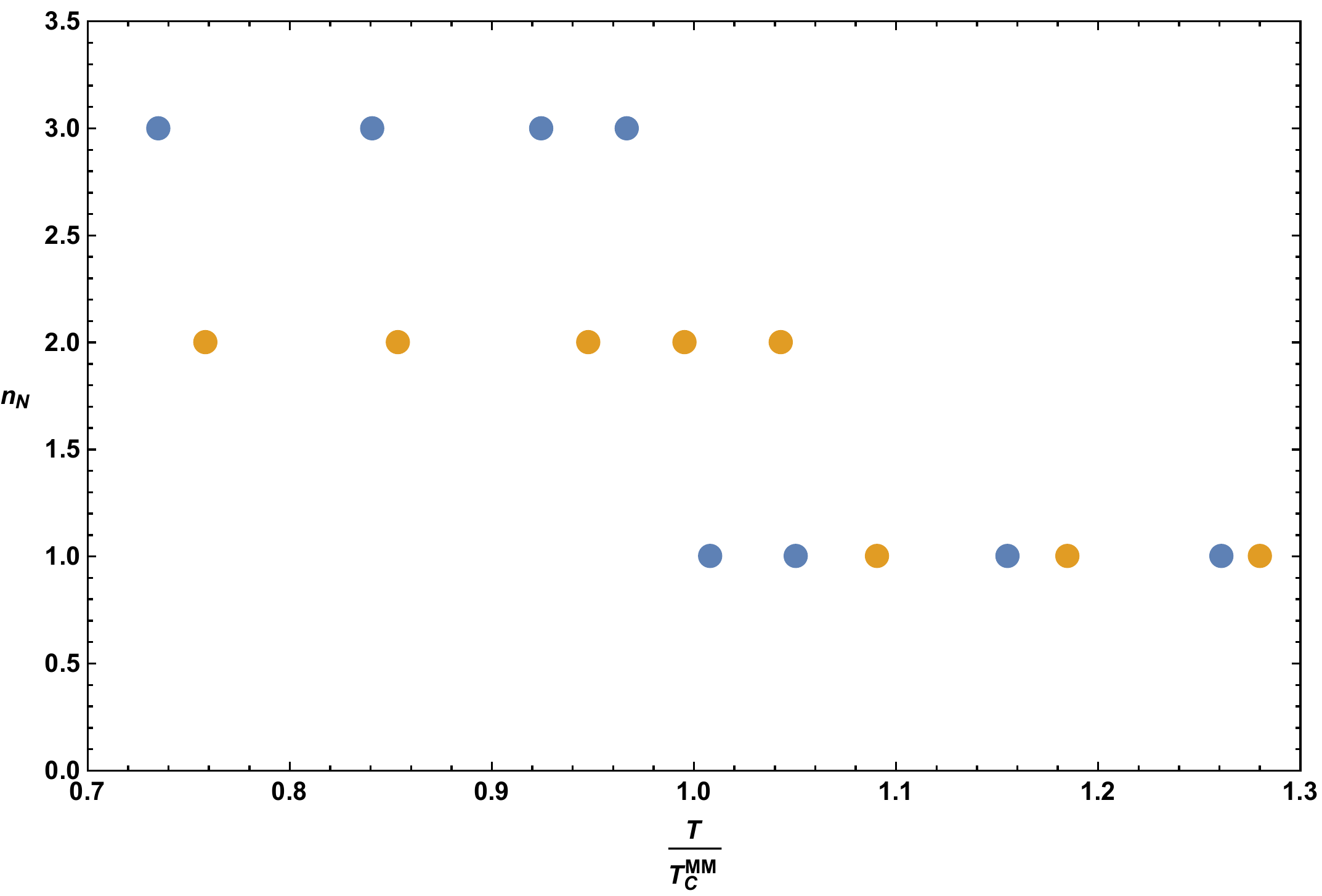}
    \caption{The finite N numerical topological index $n_N$ is plotted against a dimensionless temperature $T/T^{MM}_\text{C}$ for the same models as considered previously: blue dots are $(v_4 = -3, v_5 = 1)$ and yellow dots are $(v_3 = 1)$. Here, $T^{MM}_\text{c}$ is the transition temperature of the corresponding matrix integral. The total number of eigenvalues is $N_\text{eff} = 16800$, which is the rank of the matrix ($N=120$) multiplied by twice the number of Monte Carlo sweeps (the factor of two comes from symmetrizing the distribution). We set $\epsilon_N = \eta_N = \epsilon_{FD}$ (the value prescribed by the Freedman Diaconis rule) for both parameter sets and obtain estimates of $T_c$ that are within $5\%$ of the true $N \rightarrow \infty$ answer.}
    \label{fig:topological}
\end{figure}

\subsection{Topology competes with glassiness and magnetic order}
\label{sec:problems}

The matrix description of the spin model does not hold at all temperatures. At low temperatures, the spins can either enter a glassy \cite{PhysRevLett.74.1012} or magnetically ordered \cite{Sean14} state, neither of which is captured by large $N$ matrices. If the transition to glassiness or ordering occurs at a temperature above the topological transition temperature $T_c$, then the topological transition is not realized in the spin model. We will now explain why these phases are outside of the matrix description and determine when they arise.

\subsubsection{Magnetic order}
\label{sec:mag}

When the potential $\hat V(x)$ is unbounded below, i.e. if the highest order term in the polynomial has a negative coefficient, the singular values want to fly off to infinity. While the constraint $\int dx \rho(x) x^2 = 1$ doesn't allow this, a solution to this constraint equation ceases to exist below a critical temperature. However, this fact is pre-empted by a first order phase transition at much higher temperatures, in which magnetically ordered configurations of Ising spins dominate the spin partition function \cite{Sean14}.
Specifically, consider a matrix of Ising spins with all spins up: $S_{aB} = +1$. From (\ref{eq:Ham1}), this configuration has energy (let $-|v_n|$ be the coefficient of the highest order term)
\be\label{eq:eo}
E_0 = - |v_n| N^{n+1} \,.
\ee
The contribution of the lower order terms in the potential is subleading at large $N$. Due to the invariance of the Hamiltonian under flipping rows or columns of spins, there are $g_0 = 2^{2N - 1}$ matrices of spins with the same energy. These configurations can be contrasted with the order $2^{N^2}$ matrix integral configurations which have energies of order $N^2$. The first order large $N$ transition therefore occurs at
\be\label{eq:Tmag}
T_\text{mag} \sim |v_n| N^{n-1} \log 2 \,,
\ee
where the low energy (\ref{eq:eo}) overcomes the higher entropy of the matrix integral configurations. While these low energy states do not all have conventional ferromagnetic order, they can be characterized by a certain correlation between four spins at arbitrary separation \cite{Sean14}.

It is clear that the magnetically ordered states are outside of the matrix integral saddle point. The ordered spin matrices have a single nonzero singular value of order $N$, in contrast to the singular values of order $\sqrt{N}$ in the matrix integral. This is the simplest way in which the spin softening in \S \ref{spinsoft} can break down: there can be alternate configurations in the Ising partition function that dominate over the self-consistent matrix integral saddle. It is easy, however, to avoid this ordering by considering potentials that are bounded below.

\subsubsection{Glassiness}

A more ubiquitous breakdown of spin softening occurs due to glassiness at low temperatures \cite{PhysRevLett.74.1012}. 
The manifestation of glassiness in Fig. \ref{fig:energycurves} was that the
energy of the Ising configurations ceases to vary with temperature below some $T_\text{gl}$. The distribution of singular values is also seen to freeze below this temperature. The interpretation of the glassy transition in the spin model is therefore the familiar one: the energy landscape is extremely complex at low temperatures, with many local minima, and the system becomes trapped in a metastable minimum. In the matrix description, in contrast, we see as in Fig.1 that the energy curve continues smoothly down to a lower energy. This difference in behaviors is possible because the matrices are valued in a hyperspherical configuration space, $M_{aB}\in S^{N_{1}N_{2}-1}$, while the spin configurations take values among the discrete $2^{N_1 N_2}$ vertices of an $N_1 N_2$-dimensional hypercube. Glassy configurations which are local minima in the discrete space of spins need not be local minima on the sphere: there can be `easy' directions or `valleys' along which the free energy can be decreased towards the global minimum.\footnote{We thank Daniel Ranard for this intuitive picture.}

For models with a monomial potential $\hat V(x) = v_n x^{2n}$, the onset of glassiness at $T_\text{gl}$ occurs below the topological transition temperature $T_\text{c}$ in (\ref{eq:Tc}) only for $n=2$ and $n=3$. For $n=2$ we find $T_\text{gl} \approx 0.5$, while $T_\text{c} = 1$. For $n=3$, $T_\text{gl} \approx 1.2 - 1.5$ while $T_\text{c} \approx 2.1$. For $n \geq 4$, we find $T_\text{gl} > T_\text{c}$ and hence the singular value distribution freezes before disconnecting, and there is no topological transition. The two temperatures are quite close for $n=4$, but become increasingly different at larger $n$. For example, for $n=6$ we find $T_\text{gl} \approx 10 - 15$ while $T_\text{c} \approx 8.7$. For $n=8$, $T_\text{gl} \approx 22 - 30$, while $T_\text{c} \approx 18.9$. The range of quoted values for $T_\text{gl}$ comes from finite $N$ uncertainties in simulations with $N = 100$.

For polynomial potentials with local minima away from the origin, such as (\ref{eq:twoterm}), a topological transition can be induced at arbitrarily high temperatures by having a strongly negative term in the potential. These negative terms do not favor glassiness (on the contrary, they lead to a sort of local magnetic ordering, as we see in the following subsection), and therefore lead to a large class of models where a large $N$ topological transition occurs. We saw an example of such a transition in Fig. \ref{fig:dist}.


\subsection{Exact ground states}

Following \cite{PhysRevLett.74.1012, Marinari_1994} we can establish in certain cases that the minimum energy attained by the matrix integral is in fact the ground state energy of the spin system. In these cases the exact ground states can be constructed, despite the presence of glassiness. When $N = 2^k$, with $k$ integral, one can easily construct matrices of spins $S_\perp$ with mutually orthogonal rows:
\begin{equation}\label{eq:2k}
    \left(S_\perp S_\perp^T\right)_{ab} = N \delta_{ab} \,,
\end{equation}
The energy of such a configuration with the Hamiltonian (\ref{eq:Ham1}) is immediately evaluated as $E = N^2 \hat V(1)$. This agrees with the zero temperature matrix energy found in \S\ref{convex-1to2} for potentials with a single minimum at the origin. Indeed, it is the ground state energy of the spin system when all of the coefficients in the potential $v_n \geq 0$. This is because the energy of the configuration attains the lower bound $\frac{1}{N^{n-1}} \tr [(S S^T)^n] \geq \frac{1}{N^{n-1}} \sum_a [(S S^T)_{aa}]^n = N^2$. In the last step we used the fact that each spin squares to 1. We will see what happens when one or more of the $v_n$ are negative shortly. Finally, the spin matrices $S_\perp$ obeying $\ref{eq:2k}$ have singular values $\pm \sqrt{N}$ and therefore correspond to the low temperature distribution $\rho(x) = \frac{1}{2} \left(\delta(x-1) + \delta(x+1)\right)$ described below (\ref{eq:ext}) above.

The above construction can be generalized to the case when some terms in the potential are negative. Consider for example potentials of the form (\ref{eq:twoterm}). We would like to construct matrices of spins with the singular value distribution (\ref{eq:three}). Let $S_\parallel$ be a $2^l \times 2^l$ dimensional matrix with entries all equal to one, similar to the magnetically ordered matrices we considered in \S\ref{sec:mag}. Now let $S_\perp$ be a $2^{k-l} \times 2^{k-l}$ dimensional matrix with mutually orthogonal rows as in (\ref{eq:2k}). Here $k \geq l$. Then construct the $N \times N$ matrix, with $N = 2^k$,
\be\label{eq:tensor}
S = S_\parallel \otimes S_\perp \,.
\ee
The matrix $SS^T$ is then seen to be block diagonal, with $2^{k-l}$ blocks each given by the $2^l \times 2^l$ dimensional matrix $S_\parallel S_\parallel^T$, times the number $2^{k-1}$. This matrix has eigenvalue $0$ with multiplicity $N - 2^{k-l}$ and singular value $2^{k+l}$ with multiplicity $2^{k-l}$. Thus we obtain the distribution (\ref{eq:three}) with $2 s_\star = 2^{-l}$ and $x_\star = 2^{l/2}$. As in \S\ref{sec:withmin}, $s_\star$ and hence $l$ are to be determined by minimizing the energy on this set of configurations.

These microscopic configurations give a sense of what the matrix Ising model `wants' to do at low temperatures. Loosely put, positive terms in the potential are minimized by spin matrices with orthogonal rows, while negative terms are minimized by highly degenerate matrices. A balance between these two tendencies is achieved with tensor product matrices such as (\ref{eq:tensor}). For general $N \neq 2^k$, orthogonality cannot be perfectly realized. In all cases where magnetic ordering does not occur, a glassy phase intervenes and these exact low energy states cannot be reached. For example, for $N = 2^4$, the Markov chain Monte Carlo algorithm we are employing finds the true ground state for monomial potentials with $n=2,3,4$. For $N = 2^5$ however, the algorithm fails to find the true ground state. This supports the intuition that the glassy states become long-lived in the large $N$ limit. In fact, by increasing $n$
at fixed $N = 2^5$, we observe a parametric growth in the energy of the glassy states, even though the true ground state remains at $E = v_n N^2$.

\section{Discussion} \label{discussion}

Spin softening describes the emergence of continuous degrees of freedom from an underlying discrete dynamics. It is believed that an analogous phenomenon underpins several important aspects of gravitational physics, most notably the finiteness of the Bekenstein-Hawking black hole entropy. Systems in which the self-erasure of discreteness can be demonstrated explicitly can serve as useful toy models for this physics. Indeed, the steps in the spin softening theorem of \S\ref{spinsoft} --- in particular the introduction of collective fields built from the spins --- have some similarity to those used in solving the SYK model for black hole dynamics (e.g. \cite{Sachdev2015}). This is not a coincidence, as both have a common ancestry in methods used to study spin glasses \cite{Denef:2011ee,BrayMoore1980}.

To connect more deeply with gravitational dynamics, it would be necessary to extend these methods to quantum spin systems. Some first steps in this direction were taken in \cite{mqmqubits}. It was found that a straightforward generalization of the spin softening does not go through in the quantum case. We explain this in Appendix \ref{sec:quantum}, where we show how several of the steps in the spin softening logic can be adapted to the quantum case. In a nutshell, the problem is the following: The propagator $P$ of the $w$ fields in \S\ref{spinsoft} is bilocal in time in the quantum case, $P_{ab}(t,t')$, and the corresponding constraint removing the non-singlet terms is also bilocal in time. However, the variables $\mu_a$ only depend on a single time. There is not enough freedom in the $\mu_a(t)$ functions to satisfy the bilocal in time constraints.

Nonetheless, softening in quantum spin systems is ubiquitous at continuous quantum critical points \cite{sachdev}. Such critical points are characterized by the presence of many excitations at energies parametrically below the microscopic spin flip energy scale. The `slow' dynamics of these degrees of freedom is often described by continuous quantum mechanical theories. This brings us to the main topic of our paper, which is the existence of topological phase transitions in matrix Ising models. It is well known that phase transitions in matrix quantum mechanics are associated to emergent gapless degrees of freedom. That fact underpins the emergence of spacetime in lower dimensional string theories \cite{Klebanov:1991qa}.

In \cite{mqmqubits} a matrix quantum mechanics theory was proposed to describe the critical excitations near a quantum topological transition in a transverse field matrix Ising system. However, it was not shown that this topological transition actually occurred in the model studied. The main complication is the presence of competing glassy phases, as in the classical models we have discussed in this paper. However, in this paper we have understood how, by extending the spin softening theorem to a larger family of matrix Ising models, the topological transition can be favored over glassiness. It is of interest to revisit the quantum systems, perhaps together with quantum Monte Carlo simulations, to identify a quantum critical point within this class of theories.

Finally, in a different direction, there are rich connections between matrix dynamics, string theory and the geometry of Riemann surfaces (e.g. \cite{Dijkgraaf:2002fc,Seiberg:2004at}). The integer (\ref{eq:inf}) is an impoverished proxy for the genus of a Riemann surface associated to the distribution of singular values. It is possible that a more thorough connection to those ideas will reveal a richer topological structure in the different phases of the large $N$ matrix Ising models, as is common in other instances of topological order \cite{wen2005quantum}.

\section*{Acknowledgements}

We acknowledge helpful discussions with Jordan Cotler, Ilya Esterlis, Xizhi Han, Jonathan Luk, Daniel Ranard, Phil Saad, Michail Savvas, Stephen Shenker, Umut Varolgunes. We also thank Yibing Du for helpful comments on the draft. The work of SAH is partially supported by DOE award DE-SC0018134. ZDS is supported by the Physics/Applied Physics/SLAC Summer Research Program for undergraduates at Stanford University.

\appendix

\section{Proof of $\sigma$-Propagator Scaling}\label{sec:rig}

It was crucial to the proof of spin softening in \S \ref{spinsoft} to understand the $N$ scaling of $\sigma.$ Since our action is not quadratic in $G$, we cannot analytically integrate it out to obtain the effective action for $\sigma$ and hence read off its $N$ scaling. In this Appendix we will obtain the effective action order by order in $\sigma$. For simplicity we will work with the case of a monomial potential $V = \frac{v_n}{N^{n-1}} \tr (SS^T)^n$.

To make analytic progress on the $G$ integral, we add and subtract a Gaussian term $- \frac{1}{2} \frac{(\beta v_n)^c}{N^d} \tr G^2$, and make a shift to $\tilde G = G - \frac{iN^d}{(\beta v_n)^{c}}\sigma$. We pick $c = \frac{2}{n}$ and $d = 1$ so that the moments of $G$ under the the weighting $e^{- \frac{\beta v_n}{N^{n-1}} \tr G^n}$ have the same scaling with $\beta v_n, N$. With this choice of $c,d$, we can rewrite the partition function as
\begin{equation}
    \begin{aligned}
    Z &= \int dG e^{-\frac{\beta v_n}{N^{n-1}} \tr G^n + i \tr G\sigma} = \int dG e^{- \frac{1}{2} \frac{(\beta v_n)^{2/n}}{N} \tr G^2 + i \tr G\sigma} e^{\frac{1}{2} \frac{(\beta v_n)^{2/n}}{N} \tr G^2 - \frac{\beta v_n}{N^{n-1}} \tr G^n} \\
    &= e^{-\frac{1}{2}\frac{N}{(\beta v_n)^{2/n}} \tr \sigma^2} \int dG e^{- \frac{1}{2} \frac{(\beta v_n)^{2/n}}{N} \tr (G - \frac{iN}{(\beta v_n)^{2/n}}\sigma)^2} e^{\frac{1}{2} \frac{(\beta v_n)^{2/n}}{N} \tr G^2 - \frac{\beta v_n}{N^{n-1}} \tr G^n} \\
    &= e^{-\frac{1}{2}\frac{N}{(\beta v_n)^{2/n}} \tr \sigma^2} \int d \tilde G e^{- \frac{1}{2} \frac{(\beta v_n)^{2/n}}{N} \tr \tilde G^2} e^{\frac{1}{2} \frac{(\beta v_n)^{2/n}}{N} \tr (\tilde G + \frac{iN}{(\beta v_n)^{2/n}}\sigma)^2 - \frac{\beta v_n}{N^{n-1}} \tr (\tilde G + \frac{iN}{(\beta v_n)^{2/n}}\sigma)^n} \,.
    \end{aligned}
\end{equation}
For simplicity of notation, define new variables
\begin{equation}
    A = \frac{1}{2} \frac{(\beta v_n)^{2/n}}{N} \tr (\tilde G + \frac{iN}{(\beta v_n)^{2/n}}\sigma)^2  \,, \quad B = - \frac{\beta v_n}{N^{n-1}} \tr (\tilde G + \frac{iN}{(\beta v_n)^{2/n}}\sigma)^n \,.
\end{equation}
In terms of $A,B$, the $\tilde G$ integral takes a nice form that facilitates standard Feynman diagram calculations:
\bea
    \log Z &= & -\frac{1}{2}\frac{N}{(\beta v_n)^{2/n}} \tr \sigma^2 + \sum_{m=1}^{\infty} \frac{1}{m!} \ev{(A+B)^m}_c \nonumber
 \\    & = & - \frac{1}{2}\frac{N}{(\beta v_n)^{2/n}} \tr \sigma^2 + \sum_{m=1}^{\infty} \sum_{k =0}^m \frac{1}{m!} {m \choose k}  \ev{A^k B^{m-k}}_c \,.
\eea
The connected diagrams are in general tedious to compute. But fortunately we only care about the scaling of these diagrams with $N,\beta v_n$. We warm up by computing these scalings in the $m=1$ term:
\begin{equation}
    \begin{aligned}
    \ev{A}_c &= \ev{\frac{1}{2} \frac{(\beta v_n)^{2/n}}{N}  \tr \tilde G^2}_c + \ev{\frac{1}{2} \frac{(\beta v_n)^{2/n}}{N} \cdot \frac{i^2 N^{2d}}{(\beta v_n)^{2c}} \tr \sigma^2}_c = \frac{1}{2} N^2 - \frac{N}{2(\beta v_n)^{2/n}} \tr \sigma^2 
    \end{aligned}
\end{equation}
\begin{equation}
    \begin{aligned}
    \ev{B}_c &= \ev{\frac{\beta v_n}{N^{n-1}} \tr \tilde G^n}_c + {n \choose 2} \ev{\frac{\beta v_n}{N^{n-1}} \tr \tilde G^{n-2} (\frac{iN}{(\beta v_n)^{2/n}} \sigma)^2}_c \\
    &+ {n \choose 4} \ev{\frac{\beta v_n}{N^{n-1}} \tr \tilde G^{n-4} (\frac{iN}{(\beta v_n)^{2/n}} \sigma)^4}_c + O(N,\sigma^6) \\
    &= \frac{\beta v_n}{N^{n-1}} \cdot N^{n/2} \cdot c_n \cdot N^{n/2+1} + \frac{\beta v_n}{N^{n-1}} (\frac{N}{(\beta v_n)^{2/n}})^{(n-2)/2} N^{(n-2)/2} \cdot c_{n-2} {n \choose 2} \cdot (- \frac{N^{2}}{(\beta v_n)^{4/n}} \tr \sigma^2) \\
    &+\frac{\beta v_n}{N^{n-1}} (\frac{N}{(\beta v_n)^{2/n}})^{(n-4)/2} N^{(n-4)/2} \cdot c_{n-4}{n \choose 4} \frac{N^{4}}{(\beta v_n)^{8/n}} \tr \sigma^4 + O(N,\sigma^6) \\
    &= c_n \beta v_n N^2 - c_{n-2} {n \choose 2} (\beta v_n)^{-2/n} N \tr \sigma^2 \\
    &+ c_{n-4} {n \choose 4} (\beta v_n)^{-4/n} N \tr \sigma^4 + O(N, \sigma^6) \,.
    \end{aligned}
\end{equation}
In the calculation above, $c_n = \frac{1}{n+1} {2n \choose n}$ denotes the Catalan number counting the number of planar diagrams at a given order. 

Two observations can be made at this point. First of all, we can prove that all terms proportional to $\tr \sigma^2$ in the effective action come with a prefactor $N$. This is because in the expansion of $\log Z$, a term like $\ev{A^k B^{m-k}}_c$ contributes to $\tr \sigma^2$ in three ways:

(1) One factor of $\sigma$ in $A^k$ and one factor of $\sigma$ in $B^{m-k}$. 

(2) Two factors of $\sigma$ in $A^k$ and no factor of $\sigma$ in $B^{m-k}$.

(3) No factor of $\sigma$ in $A^k$ and two factors of $\sigma$ in $B^{m-k}$. 

In the previous computation, we have already shown that at lowest order, cases (2) and (3) give the correct $N$ scaling. At higher orders, one can check explicitly that the scaling doesn't change. The calculation is not very enlightening, so we will not include it. Case (1), however, appears for the first time in $m=2$, where we have the cross term $\ev{AB}_c$. Now let's investigate how $\ev{AB}_c$ generates something proportional to $\tr \sigma^2$:
\begin{equation}
    \begin{aligned}
    \ev{AB}_c &= \ev{(\frac{1}{2} \frac{(\beta v_n)^{2/n}}{N} \tr \tilde G^2) \cdot (- \frac{\beta v_n}{N^{n-1}} \tr \tilde G^{n-2} (\frac{iN}{(\beta v_n)^{2/n}} \sigma)^2}_c \\
    &\propto (\beta v_n)^{1-2/n} N^{2 - n} \cdot (\frac{N}{(\beta v_n)^{2/n}})^{n/2} \cdot N^{n/2-1} \tr \sigma^2  \\
    &= (\beta v_n)^{-2/n} N  \tr \sigma^2 = \frac{N}{(\beta v_n)^{2/n}} \tr \sigma^2 \,.
    \end{aligned}
\end{equation}
Notice that because we have $n$ factors of $\tilde G$, we get $n/2$ propagators $(\frac{N}{\beta v_n})^{n/2}$. The factor of $N^{n/2-1}$ comes from the $n/2 - 1$ loops\footnote{Note that this is different from the usual $n/2+1$ loops. The difference of 2 comes from the fact that the two summation indices in $\tr \sigma^2$ cannot be pulled out of the trace.}. We therefore recover the same $N, \beta v_n$ scaling as in cases (2) and (3).

The second observation concerns higher order terms in the effective action. Given that the Gaussian part of the action goes as $-N \tr \sigma^2$, we claim that all terms involving $\tr \sigma^k$ must come with a prefactor of $N$ in order for the free energy to be extensive. For example, using $- N \tr \sigma^2$ as the quadratic term, $\ev{N \tr \sigma^4} \propto N \cdot (\frac{1}{N})^2 \cdot N^3 = N^2$ where $(\frac{1}{N})^2$ comes from two powers of the propagator, and $N^3$ comes from the three loops in the diagram. This combinatorial pattern remains true for all $k$, thus validating our claim.

Using these observations, we can establish the form of the effective action and the desired scaling of the $\sigma$ propagator:
\begin{proposition}
    The effective action for $\sigma$ generated from the $\tilde G$ integral has the following structure:
    \begin{equation}
        \log Z = C - \frac{N}{(\beta v_n)^{2/n}} \tr \sigma^2 \cdot F_2(n) + F_4(n) \frac{N}{(\beta v_n)^{4/n}} \tr \sigma^4 + \ldots + F_{2k}(n) \frac{N}{(\beta v_n)^{2k/n}} \tr \sigma^{2k} \,.
    \end{equation}
    Where $C$ comes from resumming all the $\sigma$-independent terms in the perturbative expansion. In addition, the dressed propagator under the full effective action has the same $\beta v_n, N$ scaling as the bare propagator.
\end{proposition}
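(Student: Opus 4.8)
The starting point is the cumulant expansion of $\log Z$ obtained above,
\begin{equation}
\log Z = -\tfrac12\,\frac{N}{(\beta v_n)^{2/n}}\,\tr\sigma^2 \;+\; \sum_{m\ge1}\frac{1}{m!}\,\langle (A+B)^m\rangle_c \,,
\end{equation}
with the connected averages taken in the Gaussian $\tilde G$-ensemble of covariance of order $N/(\beta v_n)^{2/n}$, and with $A,B$ polynomials in the shifted matrix $\tilde G + \frac{iN}{(\beta v_n)^{2/n}}\sigma$. The plan is first to organize the right-hand side as a Feynman-diagram expansion of the $\tilde G$-theory: the monomials obtained by expanding $A$ and $B$ in $\tilde G$ and $\sigma$ are the vertices, the $\tilde G$'s are Wick-contracted into internal double lines, and the $\sigma$'s are external legs; only connected diagrams contribute to $\log Z$. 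Each diagram then carries a factor $N/(\beta v_n)^{2/n}$ per internal line, a factor $N$ per closed index loop, a factor $\beta v_n/N^{n-1}$ per $B$-vertex and $(\beta v_n)^{2/n}/N$ per $A$-vertex, and --- the crucial point --- a factor $iN/(\beta v_n)^{2/n}$ per external $\sigma$, since after the shift $\sigma$ enters $A$ and $B$ only through $\tilde G + \frac{iN}{(\beta v_n)^{2/n}}\sigma$.

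The heart of the matter is a uniform power-counting statement, which I would make transparent by rescaling $\tilde G = (N/(\beta v_n)^{2/n})^{1/2}\,\hat G$ and $\sigma=(\beta v_n)^{1/n}\,\hat\sigma$. Then the $\tilde G$-Gaussian becomes the canonical $e^{-\frac12\tr\hat G^2}$, the bare quadratic term becomes $-\tfrac N2\tr\hat\sigma^2$, and a $B$-vertex carrying $j$ factors of $\hat\sigma$ and $n-j$ of $\hat G$ acquires a coefficient of order $N^{1-(n-j)/2}$, with all $\beta v_n$-dependence cancelling; likewise for the $A$-vertices. These are exactly the 't~Hooft scalings of an $(n-j)$-valent $\hat G$-vertex, so the expansion is that of a standard large-$N$ matrix model with $\hat\sigma$ a single-trace source, and the usual Euler bookkeeping ($N^{2-2g-b}$ for genus $g$, $b$ boundaries) applies. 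This is no accident: the exponents $c=2/n$, $d=1$ were chosen precisely so that inserting extra vertices or loops costs nothing in scaling, i.e.\ so that the effective $\tilde G$-theory sits at the standard planar scaling. Hence at leading order $\log Z = C + \sum_{k\ge1} N\,F_{2k}(n)\,\tr\hat\sigma^{2k} + O(N^0)$, where $F_{2k}(n)$ is the sum of connected planar single-boundary diagrams with $2k$ external legs --- an $O(1)$ quantity depending on $n$ alone (through the vertex combinatorics and the Catalan-type factors seen above), multi-boundary (multi-trace) and higher-genus diagrams being suppressed by powers of $1/N$. Undoing the rescaling turns $N F_{2k}(n)\tr\hat\sigma^{2k}$ into $F_{2k}(n)\,\frac{N}{(\beta v_n)^{2k/n}}\,\tr\sigma^{2k}$, which is the claimed structure; the $\sigma$-independent connected diagrams resum into $C$.

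That only even powers of $\sigma$ survive follows from the invariance of the regularized $\tilde G$-measure under $(\tilde G,\sigma)\mapsto(-\tilde G,-\sigma)$: for even $n$ this leaves $A$ and $B$ invariant, so $\log Z(\sigma)=\log Z(-\sigma)$; for odd $n$ parity-odd terms $\tr\sigma^{2k+1}$ may appear, but they are again single-trace of order $N(\beta v_n)^{-(2k+1)/n}$ and do not affect the scaling analysis or the subsequent argument (they involve only the diagonal of $\sigma$ and are absorbed into the redefinition of the Lagrange multipliers $\mu_a$ of \S\ref{spinsoft}). Finally, for the dressed propagator: the full quadratic term is $-F_2(n)\,\frac{N}{(\beta v_n)^{2/n}}\,\tr\sigma^2$ with $F_2(n)=\tfrac12+(\text{a convergent sum of planar corrections})$, which is $O(1)$ and positive, so the $\sigma$-propagator --- its inverse, dressed by all subleading corrections from the higher vertices $\tr\sigma^{2k}$, $k\ge2$ --- scales as $(\beta v_n)^{2/n}/N$, exactly as the bare propagator; the 't~Hooft counting shows those higher vertices correct it only at relative order $1/N^2$, leaving the scaling unchanged.

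The step I expect to be the main obstacle is the uniform power-counting claim itself --- showing that the bound $N^{1}(\beta v_n)^{-2k/n}$ is actually \emph{saturated} by all connected single-trace diagrams at every order, with no enhancement at high order --- which requires care with the index-loop combinatorics of the double-line expansion and with the fact that for odd $n$ (where $\tr\hat G^n$ is unbounded) the whole expansion is only formal, so the resummations into $C$ and into $F_{2k}(n)$ are to be understood order by order. Establishing $F_2(n)\ne0$, which is needed for the dressed propagator to be well defined, is a secondary point that follows in practice from the explicit low-order terms dominating.
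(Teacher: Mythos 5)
Your argument reaches the paper's conclusion by a more systematic route. The paper establishes the form of the effective action by computing $\langle A\rangle_c$, $\langle B\rangle_c$ and $\langle AB\rangle_c$ explicitly, asserting that higher orders behave the same way, and supplementing this with an extensivity heuristic (``all $\tr\sigma^{2k}$ terms must carry a prefactor $N$ for the free energy to be extensive''). You instead rescale $\tilde G$ and $\sigma$ so that the power counting becomes the standard 't~Hooft counting $N^{2-2g-b}$, from which the single factor of $N$ in front of every single-trace $\tr\sigma^{2k}$ follows uniformly at all orders. This is a genuine improvement in rigor on the first half of the Proposition, and you also correctly notice something the paper glosses over: for odd $n$ the $(\tilde G,\sigma)\to(-\tilde G,-\sigma)$ symmetry does not forbid odd powers of $\sigma$ (e.g.\ a $\tr\sigma$ tadpole from $\langle\tr\tilde G^2\sigma\rangle$ at $n=3$), so the Proposition as stated is literally complete only for even $n$. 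Your parenthetical that such terms ``involve only the diagonal of $\sigma$'' is wrong, however --- $\tr\sigma^{2k+1}$ for $k\geq 1$ involves off-diagonal entries --- though this does not affect the scaling argument.

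There is one incorrect step in your treatment of the dressed propagator: the claim that the higher vertices $\tr\sigma^{2k}$, $k\geq 2$, correct the propagator ``only at relative order $1/N^2$.'' In the rescaled variables the effective action is $N\sum_k F_{2k}\tr\hat\sigma^{2k}$, a standard single-trace planar model, and a planar tadpole insertion of the quartic vertex into the self-energy contributes $N\cdot N^{-3}\cdot N = N^{-1}$ --- the \emph{same} order as the bare propagator $1/(NF_2)$, i.e.\ a relative order-one correction, not $1/N^{2}$. This is exactly what the paper's own computation $\bigl(\tfrac{(\beta v_n)^{2/n}}{N}\bigr)^{k+1}\cdot\tfrac{N}{(\beta v_n)^{2k/n}}\cdot N^{k-1}\propto\tfrac{(\beta v_n)^{2/n}}{N}$ shows. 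The conclusion you want still follows, because a same-order correction with the same $(N,\beta v_n)$ scaling leaves the scaling of the dressed propagator unchanged (barring an accidental cancellation of the dressed quadratic coefficient, which both you and the paper implicitly assume away); but the argument should be phrased as ``all planar corrections carry the same scaling,'' not ``all corrections are suppressed.''
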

\begin{proof}
    The form of the effective action follows directly from the scaling of connected diagrams $\ev{A^k B^{m-k}}_c$ that we have already established. The only new thing that we need to check is that the dressed propagator for $\sigma$ generated by the effective action always scales as $\frac{(\beta v_n)^{2/n}}{N}$. 
    
    Suppose we expand the non-Gaussian terms in the effective action. Then when we calculate the propagator $\ev{\sigma_{ij}\sigma_{kl}}$, we encounter terms like:
    \begin{equation}
        \ev{F_{2k}(n) \frac{N}{(\beta v_n)^{2k/n}} \tr \sigma^{2k} \sigma_{ij}\sigma_{kl}} \,.
    \end{equation}
    Since the bare propagator is $\frac{(\beta v_n)^{2/n}}{N}$ and we have $k+1$ factors of the propagator, this term evaluates to something proportional to $\big(\frac{(\beta v_n)^{2/n}}{N}\big)^{k+1} \cdot \frac{N}{(\beta v_n)^{2k/n}} \cdot N^{k-1}$ where $k-1$ is the number of loops. Therefore:
    \begin{equation}
        \ev{F_{2k}(n) \frac{N}{(\beta v_n)^{2k/n}} \tr \sigma^{2k} \sigma_{ij}\sigma_{kl}} \propto \delta_{ik}\delta_{jl} \frac{(\beta v_n)^{2/n}}{N} \,.
    \end{equation}
    The same proof works for any term that can appear in the expansion (cross terms can be handled in a similar way). Thus, the dressed propagator has the same N scaling as the bare propagator. 
\end{proof}
In conclusion, the effective action for $\sigma$ generates a dressed propagator that scales as $\frac{(\beta v_n)^{2/n}}{N}$. This is precisely the scaling needed to satisfy the first part of condition (\ref{eq:condition}).

\section{A distinct class of Hamiltonians}
\label{sec:modelB}

In this Appendix we show that the steps in \S \ref{spinsoft} can be adapted to a distinct class of Hamiltonians:
\begin{equation}\label{eq:H2}
    H = \sum_n \frac{u_n}{N^n} \sum_{a \neq b} [(SS^T)_{ab}]^{2n} = U(SS^T) \,.
\end{equation}
After introduction of $G,\sigma$ fields, we can rewrite the partition function in a form similar to that appearing in (\ref{eq:part}):
\begin{equation}
    Z(\beta) = \int dG e^{-\beta U(G)} \int \frac{d \sigma}{2\pi} e^{-i \tr \sigma G} \Tr e^{i \tr \sigma SS^T} \,.
\end{equation}
The final term here is the same as for the model considered in the main text, and hence (\ref{eq:one}) can again be used. It remains to establish the scaling of $\sigma$ that follows from the $G$ integral. We will now see that this scaling is the same as for the previous model.

The $G$ integral factorizes because
\begin{equation}
    \int dG e^{-\beta U(G)} e^{-i \tr \sigma G} = \prod_{a \neq b} \left[\int dG_{ab} e^{-\beta \sum_n \frac{u_n}{N^n} (G_{ab})^{2n}} e^{-i \sigma_{ab} G_{ab}} \right] \,.
\end{equation}
To extract the $N$ scaling of various quantities, we can define new variables $\bar G_{ab} = \frac{G_{ab}}{\sqrt{N}}$ and $\bar \sigma_{ab} = \sqrt{N} \sigma_{ab}$, so that the integral for each factor simplifies to
\begin{equation}
    \int dG_{ab} e^{-\beta \sum_n \frac{u_n}{N^n} (G_{ab})^{2n}} e^{-i \sigma_{ab} G_{ab}} = \sqrt{N} \int d \bar G_{ab} e^{- \sum_n \beta u_n (\bar G_{ab})^{2n} - i \bar \sigma_{ab} \bar G_{ab}} \,.
\end{equation}
Treating the last line as an effective action for $\sigma_{ab}$, we can compute the mean and variance of $\sigma$ under the effective action. By symmetry the mean is zero, while the variance
\begin{equation}
    \begin{aligned}
    (\Delta \sigma_{ab})^2 = \ev{\sigma_{ab}^2} = \int d\bar G_{ab} e^{- \sum_n \beta u_n (\bar G_{ab})^{2n}} \frac{1}{N} \int d \bar \sigma_{ab} \bar \sigma_{ab}^2 e^{-i \bar \sigma_{ab} \bar G_{ab}} \sim \frac{1}{N} \,.
    \end{aligned}
\end{equation}
This is the same scaling for $\Delta \sigma_{ab}$ as for the model considered in the main text, and hence the condition (\ref{eq:condition}) that needs to be imposed to drop the non-singlet terms (in the $w$ integral) is the same.

The remaining steps all proceed as in \S \ref{spinsoft}, again leading to
\begin{equation}\label{eq:res2}
    \begin{aligned}
    Z(\beta) &= \left(2e^{-\frac{1}{2}}\right)^{N_1 N_2} \int dM  \int d \mu e^{i \mu \left[N_1 N_2 - \tr (MM^T)\right]} e^{-\beta \tr[U(MM^T)]} \,,
    \end{aligned}
\end{equation}
where now $U$ is given by (\ref{eq:H2}). We have verified (\ref{eq:res2}) numerically for this class of models, by matching the energies as a function of temperature (analogously to Fig. \ref{fig:energycurves}). It is worth noting that this family of models does not have an emergent $O(N_1, \R) \times O(N_2, \R)$ symmetry. This means that the matrix integral cannot be solved using standard techniques. Nonetheless it was important that the Ising model still had an $O(N_1, \Z) \times O(N_2, \Z)$ symmetry.

\section{Remarks on Quantum Generalizations}
\label{sec:quantum}

The transverse field matrix Ising Hamiltonian is
\begin{equation}\label{eq:TFIM}
	H = H_0 + \tr [V(S^z S^{zT})] \,, \qquad H_0 = - h \sum_{aB} S^x_{aB}\,.
\end{equation}
Here $V$ is as in (\ref{eq:Ham1}) in the main text. The quantum disordering transverse field term $H_0$ has been added at each site. This term preserves the symmetries of the classical Ising model \cite{mqmqubits}. 

\ignore{It plays the role of a discrete Laplacian at every lattice site and one might expect a large N duality between this model and a spherically constrained matrix quantum mechanics (analogous to the spin softening theorem proven in \S \ref{spinsoft}) 
\begin{equation}
    \Tr e^{-\beta H_{spin}} \quad \rightarrow \quad \int dM \delta(\tr MM^T - N^2) e^{- \int_0^{\beta} dt f(h) \tr [\dot M(t) \dot M(t)^T] + V[MM^T(t)]]}
\end{equation}
The authors of \cite{mqmqubits} studied the $n=2$ version of this duality and matched ground state energies to third but not fourth order in the dimensionless coupling $\frac{v_4}{h}$. In this appendix, we generalize their result and prove third order perturbative matching of the free energies at finite temperature for all $n$. Over the course of the proof, we will also understand more precisely the explanations given in \S \ref{discussion} for the failure of the duality beyond third order. }

To obtain a path integral expression for the partition function, a Suzuki-Trotter decomposition can be used. The Euclidean time direction is divided into $M$ segments of length $\epsilon = \beta/M \ll 1$. In terms of a basis of states $\ket{S}$ that are eigenvectors of $S^z_{aB}$ one has:
\begin{equation}
    \begin{aligned}
    Z &= \Tr e^{-\beta H} = \sum_{S_{aB}(m)} \prod_{m=1}^M \bra{S(m)} e^{-\epsilon H_0} e^{- \epsilon V} \ket{S(m+1)} \\
    &= e^{- M J} \sum_{S_{aB}(m)} \exp{\sum_{m=1}^M \sum_{a,B} J S_{aB}(m) S_{aB}(m+1)} \cdot \exp{-\epsilon \sum_{m=1}^M V[(S S^T)(m)]} \,.
    \end{aligned}
\end{equation}
Where $J = - \frac{\log (\epsilon h)}{2}$ is proportional to the effective energy cost of a spin flip (this term is obtained in a standard way by expanding $e^{-\epsilon H_0}$ to first order in $\epsilon$) and $\sum_{S_{aB}(m)}$ denotes the sum over all spin configurations $S_{aB}(m) = \pm 1$.

Introducing $G,\sigma$ fields for the spin variables $S_{aB}(m)$ yields, performing manipulations similar to in \S\ref{spinsoft},
\begin{equation}
    \begin{aligned}
    Z &\propto \int DG D \sigma \exp{-\epsilon \sum_{m=1}^M \tr V[G(m)] -i \epsilon \sum_{m=1}^M \tr \sigma(m) G(m) + i \epsilon \sum_{m=1}^M \mu(m) N^2}\\
    &\cdot \bigg(\sum_{S_{a}(m)} \exp{i\epsilon \sum_{m=1}^M \sum_{ab} \left[ (\sigma_{ab}(m) - \mu(m) \delta_{ab}) S_{a}(m) S_{b}(m) +  \frac{J}{\epsilon} \delta_{ab} S_{a}(m) S_{b}(m+1) \right]} \bigg)^{N_2} \,,
    \end{aligned}\label{eq:mmm}
\end{equation}
where in the last line we again factorized the spin trace utilizing the $O(N_2,\mathbb{Z})$ symmetry of the Hamiltonian, so that there is only  a sum over spins $S_a \equiv S_{a1}$. We have also directly set all the $\mu(m)_a = \mu(m)$ equal. Note that these undetermined quantities now depend on $m$.

Define the term inside the final bracket in (\ref{eq:mmm}) as $z(\sigma,\mu)$. After introducing a new variable $\tilde \sigma_{ab} = 2 i (\sigma_{ab} - \mu \delta_{ab})$ and doing a Hubbard Stratonovich transformation on $z(\sigma, \mu)$, we can further factorize the trace over Ising variables as in equation (\ref{eq:two}) and (\ref{eq:threeM}) of the main text:
\be\label{eq:rowtrace}
    z(\sigma, \mu) = \frac{1}{\prod_m  \sqrt{\det \tilde \sigma(m)}} \int D w \exp\left\{\frac{\epsilon}{2} \sum_{m,a,b} [w_a (\tilde \sigma^{-1})_{ab} w_b](m)\right\} \prod_a z_a(w, J) \,,
\ee   
where for each $a$, $z_a(w,J)$ is the partition function of a 1D classical Ising model with $M$ sites and periodic boundary conditions $S_{a}(1) = S_{a}(M+1)$:
\be\label{eq:parta}
z_a(w, J) = \sum_{S_{a}(m)} \exp{-\epsilon \sum_{m=1}^M \left(w_a(m) S_{a}(m)+ \frac{J}{\epsilon} S_{a}(m) S_{a}(m+1) \right)} \,.
\ee

At this point, we would like to obtain an expression analogous to (\ref{eq:threeM}) in the main text, and then argue that the higher order in $w$, non-singlet, terms can be dropped by some suitable choice of $\mu$. To this end we expand the first term $e^{-\epsilon \sum_{m=1}^M w_a S_{a1}(m)}$ in (\ref{eq:parta}), evaluate the spin traces using the exact correlation functions of the 1D Ising model with interaction $J S_{a}(m) S_{a}(m+1)$, and then re-exponentiate. This leads to
\be \label{eq:singlesite}
    z_a(w,J) = \exp\left\{\frac{\epsilon^2}{2} \sum_{m,m'} w_a(m) K(m - m') w_a(m') + \text{non-singlets}\right\} \,.
\ee
Here the propagator
\be
K(m-m') = e^{|m'-m| \log \tanh J} \,.
\ee

Using (\ref{eq:singlesite}) in (\ref{eq:rowtrace}), and taking the continuum limit $\epsilon \to 0$ with time $t = \epsilon m$ fixed, the effective path integral takes a form that is reminiscent of (\ref{eq:threeM}):
\begin{equation}\label{eq:nice}
    z(\sigma,\mu) \propto \frac{1}{\sqrt{\det \tilde \sigma}} \int D w \exp\left\{-\frac{1}{2} \int dt dt' w_a(t) (\tilde \sigma^{-1} - K)_{ab}(t,t') w_a(t') + \text{nonsinglets} \right\}  \,.
\end{equation}
In the continuum limit,
\be
K(t-t') = e^{-2 \beta h |t-t'|} \,,
\ee
is the thermal propagator of a harmonic oscillator with thermal mass proportional to $h$. More precisely:
\begin{equation}\label{eq:Kinv2}
    K^{-1} = - \frac{1}{8h \tanh (\beta h)} \frac{d^2}{dt^2} + \frac{h}{2 \tanh \beta h} \,.
\end{equation}

Following manipulations in the main text, we can expand the nonsinglet terms in (\ref{eq:nice}) and Wick contract using the propagator:
\begin{equation}
    P_{ab}(t,t') = \left(\frac{1}{\tilde \sigma^{-1} - K}\right)_{ab}(t,t') = \left(\frac{\tilde \sigma}{1 - K\tilde \sigma}\right)_{ab}(t,t') \,.
\end{equation}
A scaling argument similar to that used to establish (\ref{eq:condition}) shows that for $a \neq b$
\begin{equation}
    P_{ab}(t,t') \sim \frac{1}{\sqrt{N}} \,.
\end{equation}
At this point, one would like to choose $\mu_a(t)$ so that for all $a,t,t'$:
\begin{equation}\label{eq:quantumconstraint}
    P_{aa}(t,t') = 0  \,.
\end{equation}
If we can satisfy these constraints and thus drop the nonsinglet terms, we can obtain a matrix quantum mechanics by integrating back out the $G,\sigma$ fields, just as we did in \S\ref{spinsoft}. We would obtain
\begin{equation}
    Z \propto \int DM \delta (N^2 - \tr MM^T) \exp{- \int dt \left( \frac{1}{2} \tr [M K^{-1} M^T](t) + V[MM^T(t)] \right)} \,.
\end{equation}
Recall from (\ref{eq:Kinv2}) that $K^{-1}$ is a local in time operator, and so this is the partition function of a constrained matrix quantum mechanics.

It thus suffices to establish the constraint in (\ref{eq:condition}). A priori, this seems impossible because $\mu(t)$ is a local in time Lagrange multiplier, while $P_{aa}(t,t') = 0$ is a bilocal constraint. However, for the case of $n=2$ considered in \cite{mqmqubits} the expectation values $\ev{P_{aa}(t,t')}$ under the $G, \sigma$ path integral are time independent up to third order in $\beta v_4$, and the constraints $\ev{P_{aa}(t,t')} = 0$ are exactly enforced by the saddle point equations for $\mu(t)$ under the $\mu(t)$ path integral (analogously to what happened in the classical case in the main text). This low order `miracle' explains the matching of ground state energies for the spin and matrix models to third order in perturbation theory, noted in \cite{mqmqubits}. Beyond third order, the constraint equations $\ev{P_{aa}(t,t')} = 0$ become genuinely bilocal in time and $\mu(t)$ no longer has enough degrees of freedom to enforce all the constraints. 

This argument leaves open the hope that we can go in the reverse direction: start with a matrix quantum mechanics with some bilocal constraint, and adjust the constraint carefully to match the transverse field Ising model to all orders in perturbation theory. It turns out that this reverse direction is also impossible because the diagrammatic expansion for the transverse field Ising model involves integrals over multi-local functions in time. For example, at $2n$-th order in perturbation theory, one encounters a diagram involving the correlator $F(t_1,t_2,\ldots, t_{2n}) = \cosh(\beta h - 2 \beta h |t_1 - t_2 + \ldots t_{2n-1} - t_{2n}|)$. This class of correlators fail to Wick factorize. On the other hand, diagrams for a matrix quantum mechanics with bilocal constraint always Wick factorize into products of propagators. This discrepancy between the two diagrammatic expansions makes it very hard for a single bilocal in time constraint equation to give free energy agreement for all values of the coupling constants (i.e. $\beta, v_n, h$).

\providecommand{\href}[2]{#2}\begingroup\raggedright\endgroup

\end{document}